%
\documentclass{article}
\usepackage[a4paper, total={6in, 9in}]{geometry}
\usepackage[utf8]{inputenc}
\pagenumbering{arabic}
\usepackage{amsmath,amsfonts}
\usepackage{booktabs} 
\usepackage{caption} 
\usepackage{subcaption} 
\usepackage{graphicx}
\usepackage{pgfplots}
\usepackage[all]{nowidow}
\usepackage[utf8]{inputenc}
\usepackage{tikz}
\usetikzlibrary{er,positioning,bayesnet}
\usepackage{multicol}
\usepackage{algpseudocode,algorithm,algorithmicx}
\usepackage{physics}
\usepackage{amssymb}
\usepackage{amsthm}
\usepackage{hyperref}

\newtheorem{theorem}{Proposition}[subsection]
\newtheorem{remark}{Remark}[subsection]
%

\definecolor{blue}{HTML}{1F77B4}
\definecolor{orange}{HTML}{FF7F0E}
\definecolor{green}{HTML}{2CA02C}

\pgfplotsset{compat=1.14}

\newcommand{\no}{\noindent}
\setlength{\floatsep}{3pt plus 1pt minus 1pt}
\setlength{\textfloatsep}{3pt plus 1pt minus 1pt}
\setlength{\intextsep}{3pt plus 1pt minus 1pt}
\setlength{\abovecaptionskip}{2pt plus 1pt minus 1pt}


\begin{document}
\title{A High-Order Asymptotic Analysis of the Benjamin-Feir Instability Spectrum in Arbitrary Depth}
%
%
\author{\large Ryan P. Creedon$^1$ and Bernard Deconinck$^2$}
%
%

\date{\normalsize \today}

\maketitle              
\vspace*{-0.7cm}
{\begin{center} {\scriptsize\noindent $^1$Department of Applied Mathematics, University of Washington, Seattle, WA, USA, \href{mailto:creedon@uw.edu}{creedon@uw.edu}\\ \vspace*{-0.0cm} \noindent $^2$Department of Applied Mathematics, University of Washington, Seattle, WA, USA,  \href{mailto:deconinc@uw.edu}{deconinc@uw.edu}}\end{center}}

\begin{abstract}
We investigate the Benjamin-Feir (or modulational) instability of Stokes waves, \emph{i.e.}, small-amplitude, one-dimensional periodic gravity waves of permanent form and constant velocity, in water of finite and infinite depth.
We develop a perturbation method to describe to high-order accuracy the unstable spectral elements associated with this instability, obtained by linearizing Euler's equations about the small-amplitude Stokes waves. These unstable elements form a figure-eight curve centered at the origin of the complex spectral plane, which is parameterized by a Floquet exponent. Our asymptotic expansions of this figure-eight are in excellent agreement with numerical computations as well as recent rigorous results by Berti, Maspero, \emph{\&} Ventura \cite{bertiMasperoVentura2021,bertiMasperoVentura2022}. From our expansions, we derive high-order estimates for the growth rates of the Benjamin-Feir instability and for the parameterization of the Benjamin-Feir figure-eight curve with respect to the Floquet exponent. We are also able to compare the Benjamin-Feir and high-frequency instability spectra analytically for the first time, revealing three different regimes of the Stokes waves, depending on the predominant instability. \\\\
\noindent {\bf Keywords}: Euler's equations, \and Stokes waves, \and Benjamin-Feir (or modulational) instability, \and spectral instability, \and perturbation methods
\end{abstract}
%


%
\section{Introduction}
In 1847, Stokes \cite{stokes1847} derived a formal asymptotic expansion for the small-amplitude, periodic traveling wave solutions of the full water wave equations in infinite depth, see Figure \ref{fig1} for a schematic. Seventy-five years later, Nekrasov \cite{nekrasov1921} and Levi-Civita \cite{levicivita1925} proved the validity of these expansions if the amplitude $\varepsilon$ of the waves is sufficiently small. Not long after, Struik \cite{struik1926} extended this analysis to finite depth. The stability with respect to sideband perturbations of these solutions, now known as Stokes waves, was investigated first experimentally by Benjamin \emph{\&} Feir \cite{benjaminFeir1967} in the late 1960s and immediately after supported by independent formal calculations by Benjamin \cite{benjamin1967} and Whitham \cite{whitham1967b} using distinct methods. Both calculations suggest that Stokes waves are modulationally unstable with respect to longitudinal sideband perturbations provided $\kappa h > \alpha_{BW}$, where $\kappa$ is the wavenumber of the Stokes waves, $h$ is the constant depth of the water, and $\alpha_{BW} = 1.3627827567...$. This instability is now known as the Benjamin-Feir or modulational instability. 

In the years since the pioneering work of Benjamin and Whitham, several papers have explored the Benjamin-Feir instability experimentally, numerically, and analytically. It is impossible to summarize all these works here, rather we highlight a few works that are most relevant to our investigation. For a more comprehensive history of the Benjamin-Feir instability, see \cite{craik2004,grimshaw2005,yuenLake1980} and, in different contexts, \cite{deconinckOliveras2011,korotkevichetal2016,zakharovOstrovsky2009}. 
\begin{figure}[tb]
    \centering
    \includegraphics[height=3.2cm,width=10.6cm]{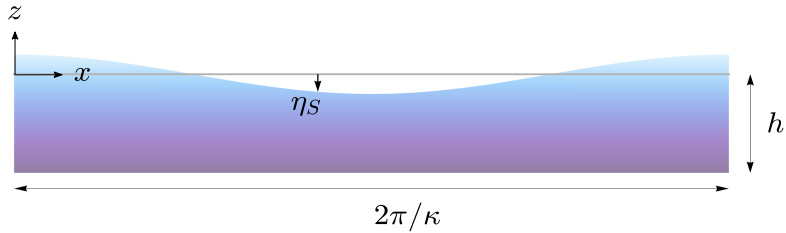} ~\\
    \caption{\small A schematic of a $2\pi/\kappa$-periodic Stokes wave $\eta_S$ in finite depth $h$. The wave travels uniformly to the left or right, depending on its velocity. In the case of infinite depth, Stokes derived an asymptotic expansion for $\eta_S$ and its velocity as power series in a small parameter $\varepsilon$ related to the amplitude of the wave.}
    \label{fig1}
\end{figure}

Beginning in the 1970s, Bryant \cite{bryant1974,bryant1978} studied the stability of Stokes waves in shallow water ($\kappa h<\alpha_{BW}$) with respect to co-periodic longitudinal and transverse perturbations, respectively. Around the same time, Longuet-Higgins \cite{longuet-higgins1978a,longuet-higgins1978b} considered the stability of Stokes waves with respect to sub- and super-harmonic longitudinal perturbations in infinitely deep water. 
In 1982, McLean \cite{mclean1982} generalized the results of Bryant and Longuet-Higgins by investigating the stability of Stokes waves in finite depth with respect to sub- and super-harmonic transverse perturbations. 

All these results are obtained by linearizing the full water wave equations about a Stokes wave and numerically solving a resulting spectral problem. The elements of the spectrum $\lambda$ give the exponential growth rates of the perturbations, and the collection of all such $\lambda$ is called the stability spectrum of the Stokes wave. The Benjamin-Feir instability occurs when four eigenvalues of the stability spectrum collide at the origin of the complex spectral plane and separate, each eigenvalue occupying a distinct quadrant, as the amplitude $\varepsilon$ of the Stokes wave increases. This is a consequence of the quadrafold symmetry of the spectrum due to the Hamiltonian structure of the water wave equations \cite{haragusKapitula2008,zakharov1968}.

Despite numerical evidence of these eigenvalues close to the origin in the complex spectral plane, a proof of their existence did not appear until 1995. In the classic work of Bridges \emph{\&} Mielke \cite{bridgesMielke1995}, techniques from spatial dynamics and center manifold theory are used to prove the existence of unstable eigenvalues close to the origin whenever $\kappa h > \alpha_{BW}$ and $\varepsilon$ is sufficiently small. The proof is valid for any finite depth but fails in water of infinite depth. Only in the past two years have Nguyen \emph{\&} Strauss \cite{nguyenStrauss2020} developed a proof based on Lyapunov-Schmidt reduction that works in both finite and infinite depth. Another proof based on periodic Evans functions appeared in the literature more recently \cite{hurYang2020}.

In 2011, Deconinck \emph{\&} Oliveras \cite{deconinckOliveras2011} revisited the numerical computations of McLean and others, also allowing for more general quasi-periodic perturbations parameterized by a Floquet exponent $\mu$ \cite{deconinckKutz2006}. For fixed $\varepsilon$, this results in a continuous stability spectrum that can be decomposed as a union over all Floquet exponents of point spectra containing a countable number of finite-multiplicity eigenvalues~\cite{kapitulaPromislow2013}. To our knowledge, \cite{deconinckOliveras2011} is the first work displaying full stability spectra of Stokes waves in finite and infinite depth (to within machine precision) that exhibits a figure-eight curve at the origin of the complex spectral plane associated with the Benjamin-Feir instability, see a schematic in Figure~\ref{fig2}. Evidence of instabilities away from the origin, referred to as high-frequency instabilities, was also demonstrated in \cite{deconinckOliveras2011}. These high-frequency instabilities have been analytically explored with Trichtchenko in \cite{creedonDeconinckTrichtchenko2022} and by Hur \emph{\&} Yang in \cite{hurYang2020}. 

\begin{figure}[tb]
    \centering
    \includegraphics[height=6.3cm,width=6.9cm]{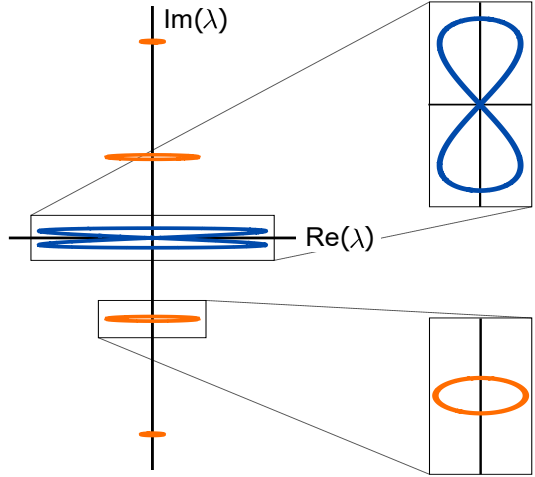}  \vspace*{4mm}
    \caption{\small A schematic of the stability spectrum for a small-amplitude Stokes wave in sufficiently deep water. The Benjamin-Feir figure-eight curve is colored blue, while the high-frequency instabilities are colored orange. In this manuscript, we are concerned with high-order asymptotic approximations of the blue figure-eight curve. For the corresponding approximations of the high-frequency instabilities in orange, see \cite{creedonDeconinckTrichtchenko2022}.}
    \label{fig2}
\end{figure}

In recent months, seminal work by Berti, Maspero, \emph{\&} Ventura \cite{bertiMasperoVentura2021,bertiMasperoVentura2022} has confirmed the existence of the Benjamin-Feir figure-eight in finite and infinite depth, provided $\kappa h > \alpha_{BW}$ and $\varepsilon$ is sufficiently small. The proof of both cases relies on the Hamiltonian and reversibility properties of the water wave equations together with Kato's theory of similarity transformations \cite{kato1966} and KAM theory. Reported in these works are explicit expressions for the figure-eight curves, up to real analytic functions of the Floquet exponent $\mu$ and the amplitude of the Stokes waves $\varepsilon$. A low-order approximation of the curves is also given. 

In this work, we obtain high-order asymptotic expansions of the Benjamin-Feir figure-eight curve in finite and infinite depth. In particular, we seek high-order asymptotic estimates for the interval of Floquet exponents parameterizing the figure-eight and for the most unstable eigenvalue. Using the results of \cite{creedonDeconinckTrichtchenko2022}, this allows us to compare the Benjamin-Feir and high-frequency growth rates analytically. This comparison suggests three regimes for Stokes waves: (i) shallow water ($\kappa h < \alpha_{BW}$), in which only high-frequency instabilities exist, (ii) intermediate water ($\alpha_{BW}<\kappa h<\alpha_{DO}(\varepsilon) = 1.4308061674...+\mathcal{O}\left(\varepsilon^2\right)$), in which both instabilities exist but high-frequency instabilities dominate, and (iii) deep water ($\kappa h > \alpha_{DO}(\varepsilon)$), in which both instabilities are present, but the Benjamin-Feir instability dominates.

Our method to obtain these high-order asymptotic approximations is a modification of that developed for high-frequency instabilities in \cite{creedonDeconinckTrichtchenko2021b,creedonDeconinckTrichtchenko2021a,creedonDeconinckTrichtchenko2022}. Although the method is formal, it offers a more direct approach to the Benjamin-Feir figure-eight curve and produces results consistent with numerical computations (for sufficiently small $\varepsilon$) as well as with rigorous results appearing in \cite{bertiMasperoVentura2021,bertiMasperoVentura2022}. The method loses validity for sufficiently large $\varepsilon$, when the Benjamin-Feir instability spectrum separates from the origin and changes its topology, see Figure \ref{fig3}. Some of the lower-order details of our method are also presented by Akers \cite{akers2015} for the Benjamin-Feir instability in infinite depth, although this work uses different conventions for the water wave equations and underlying Stokes waves. In contrast, our expressions are in one-to-one correspondence with those appearing in \cite{bertiMasperoVentura2021,bertiMasperoVentura2022}, giving confidence in the rigorous results as well as in our asymptotic calculations.

\begin{figure}[tb]
    \centering \hspace*{-0.5cm}
    \includegraphics[height=5.9cm,width=13.8cm]{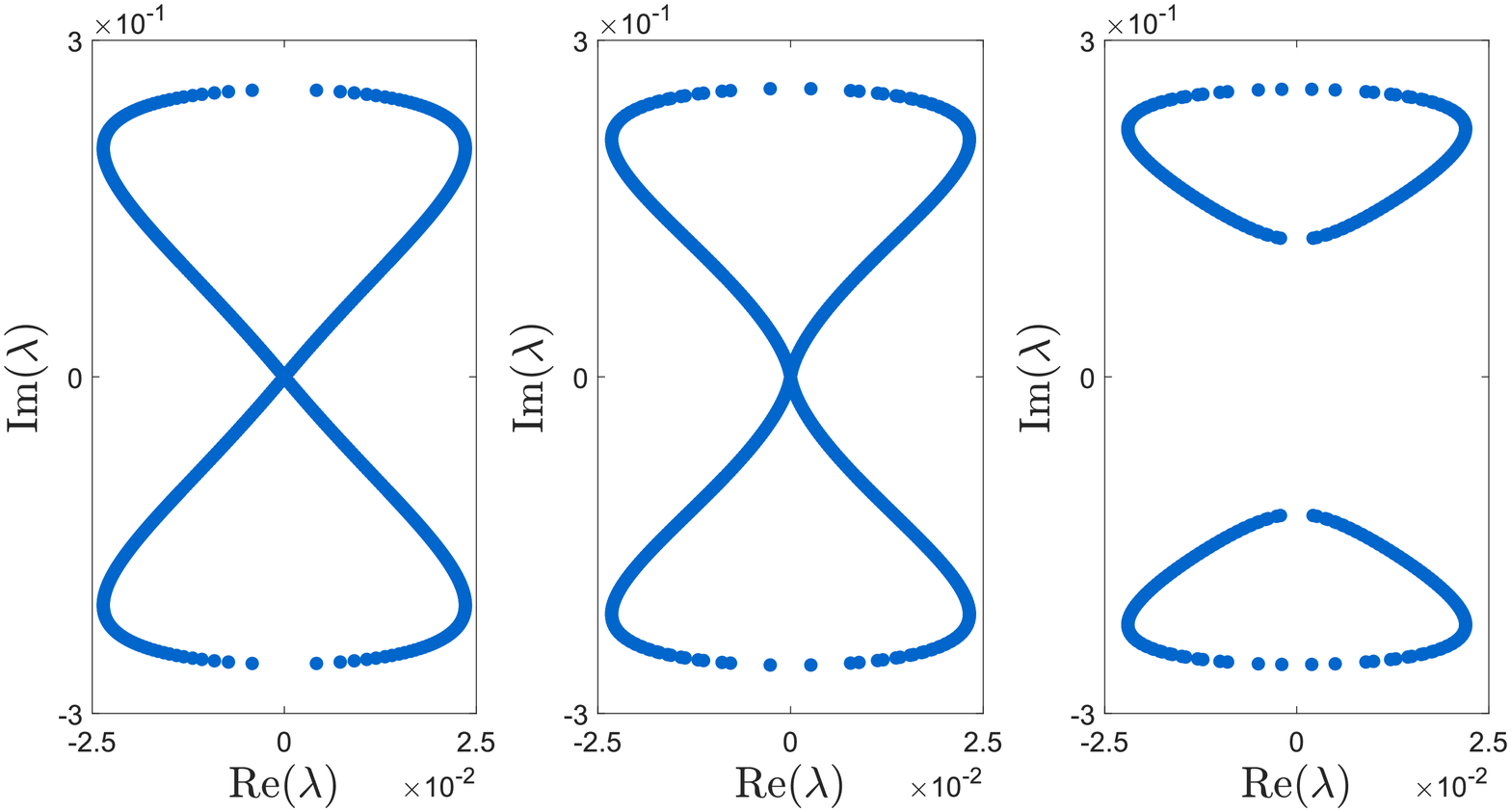}  \vspace*{3mm}
    \caption{\small Numerically computed Benjamin-Feir instability spectra in infinitely deep water for Stokes waves of amplitude $\varepsilon = 0.31$ (left), $\varepsilon = 0.32$ (middle), and $\varepsilon = 0.33$ (right). The methods presented in this work apply only for sufficiently small $\varepsilon$ and, thus, do not capture the separation of the figure-eight from the origin. }
    \label{fig3}
\end{figure}

\section{Preliminaries}

\subsection{The AFM Formulation of the Water Wave Equations}

The Euler equations governing the dynamics of a one-dimensional periodic water wave free of surface tension over an inviscid, irrotational, and two-dimensional bulk are
\begin{subequations}
\begin{align}
\phi_{xx} + \phi_{zz} &= 0, \quad \hspace*{.2cm} \textrm{in} \quad \{(x,z):|x|<\pi/\kappa ~\textrm{and} -h < z < \eta \},  \\
    \eta_t + \eta_x \phi_x &= \phi_z, \quad \textrm{on} \quad z=\eta, \\ 
    \phi_t + \tfrac{1}{2} \left(\phi_x^2 + \phi_z^2\right) + g\eta &= 0, \quad \hspace*{.2cm} \textrm{on} \quad z=\eta, \\
    \phi_z &= 0, \quad \hspace*{0.07cm} \textrm{on} \quad z=-h, 
    \end{align} \label{1.1a}
\end{subequations}
\no together with the periodicity conditions
\begin{subequations}
\begin{align}
    \eta(-\pi/\kappa,t) &= \eta(\pi/\kappa,t), \label{2.1a} \\ \quad \phi_x(-\pi/\kappa,z,t) = \phi_x(\pi/\kappa,z,t), &\quad \phi_z(-\pi/\kappa,z,t) = \phi_z(\pi/\kappa,z,t).
\end{align}
\label{1.1b}
\end{subequations}
\no \hspace*{-0.23cm} Here $x$ is the horizontal coordinate, $z$ is the vertical coordinate, $t$ is time, $\eta(x,t)$ is the surface profile of the water wave relative to the $z=0$ reference level, $\phi(x,z,t)$ is the velocity potential in the bulk, $g$ is the magnitude of the vertical acceleration due to gravity, $h$ is the average depth of the water (assuming constant bathymetry), and $2\pi/\kappa$ is the period of the wave. 

Many formulations of water waves exist that avoid the bulk variable $\phi$ but whose dynamics are fully equivalent to \eqref{1.1a}-\eqref{1.1b}, \emph{e.g.}, \cite{craigSulem1993,dyachenkoZahharovKuznetsov1996,shaw1979,zakharov1968}. We use the Ablowitz-Fokas-Musslimani (AFM) formulation \cite{ablowitzFokasMusslimani2006,ablowitzHaut2008}, which only involves the surface variables $\eta$ and $q = \phi(x,\eta,t)$ and avoids computations of the Dirichlet-to-Neumann operator. The AFM equations can be written as 
\begin{subequations}
\begin{align}
    \int_{-\pi/\kappa}^{\pi/\kappa} e^{-i\kappa m x}\Big[\eta_t\cosh\left(\kappa m\left(\eta + h\right) \right) +iq_x\sinh\left(\kappa m\left(\eta + h \right) \right) \Big] dx &= 0, \quad m \in \mathbb{Z} \setminus \{0\}, \label{1a} \\
    q_t + \frac12 q_x^2 +g\eta - \frac{1}{2}\frac{\left(\eta_t +\eta_xq_x\right)^2}{1+\eta_x^2} &= 0. \label{1b}
\end{align}
\end{subequations}
\no Throughout this work, we refer to \eqref{1a} and \eqref{1b} as the nonlocal and local equations, respectively. 

Moving to a traveling frame $x \rightarrow x -ct$ with velocity $c$ and rescaling variables as in  \cite{creedonDeconinckTrichtchenko2022}, one arrives at the nondimensional AFM equations
\begin{subequations}
\begin{align}
    \int_{-\pi}^{\pi} e^{-i m x}\Big[\left(\eta_t-c\eta_x\right)\cosh\left( m\left(\eta + \alpha\right) \right) +iq_x\sinh\left(m\left(\eta + \alpha \right) \right) \Big] dx &= 0, \quad m \in \mathbb{Z} \setminus \{0\},\label{2a} \\
    q_t - cq_x + \frac12 q_x^2 +\eta - \frac{1}{2}\frac{\left(\eta_t-c\eta_x +\eta_xq_x\right)^2}{1+\eta_x^2} &= 0, \label{2b}
\end{align}
\end{subequations}
\noindent where $\alpha = \kappa h$ is the aspect ratio of the depth of the water to the period of the surface waves modulo $2\pi$. These nondimensionalized equations are equivalent to the dimensional equations (\eqref{1a}-\eqref{1b}) in the traveling frame with $\kappa = 1$ and $g=1$. 

\begin{remark}
In infinitely deep water, the local equation \eqref{2a} is unchanged while the nonlocal equation \eqref{2b} becomes
\begin{align}
    \int_{-\pi}^{\pi} e^{-i m x +|m|\eta} \Big[|m|\left(\eta_t-c\eta_x\right) + i m q_x \Big] dx = 0, \quad m \in \mathbb{Z} \setminus \{0\}.
\end{align}
\end{remark}

\subsection{Small-Amplitude Stokes Waves}

We seek time-independent solutions $\eta_S = \eta_S(x)$ and $q_S= q_S(x)$ of \eqref{2a}-\eqref{2b} that satisfy

\begin{enumerate}

    \item[(i)] $\eta_S$ and $q_S$ are infinitely differentiable with respect to $x$, 
    
    \item[(ii)] $\eta_S$ and $q_{S,x}$ are $2\pi$-periodic with respect to $x$ (but not necessarily $q_S$), 
    
    \item[(iii)] $\eta_S$, $q_{S,x}$, and $c$ depend analytically on a small parameter $\varepsilon$ such that 
    \begin{align} 
    \eta_S\big|_{\varepsilon =0} = 0 = q_{S,x}\big|_{\varepsilon =0}  \quad \textrm{and} \quad ||\eta_S||_{\textrm{L}^2} = \varepsilon + \mathcal{O}\left(\varepsilon^2\right) \quad \textrm{as} \quad \varepsilon \rightarrow 0, \nonumber 
    \end{align} 
    \item[(iv)] $\eta_S$ and $q_{S,x}$ are even in $x$ without loss of generality, and $c(\varepsilon)$ is even in $\varepsilon$,  
    
    \item[(v)] $\eta_S$ has zero average.

\end{enumerate}

\noindent These solutions correspond to the small-amplitude Stokes waves discussed in the Introduction. Their existence is proven in \cite{levicivita1925,nekrasov1921,struik1926}. Expansions of $\eta_S$, $q_S$, and $c$ as power series in $\varepsilon$ are carried out in \cite{creedonDeconinckTrichtchenko2022} using the AFM equations \eqref{2a}-\eqref{2b}. These expansions take the form
\begin{subequations}
\begin{align}
    \eta_S(x;\varepsilon) &= \varepsilon\cos(x) + \sum_{j=2}^{\infty} \eta_j(x)\varepsilon^j, \\
    q_S(x;\varepsilon) &= \frac{\varepsilon}{c_0}\sin(x)+ \sum_{j=2}^{\infty} q_j(x)\varepsilon^j, \\
    c(\varepsilon) &= c_0 + \sum_{j=1}^{\infty} c_{2j}\varepsilon^{2j}, 
\end{align}
\end{subequations}
where $c_0^2 = \tanh(\alpha)$. In what follows, we study right-traveling waves so that $c_0=\sqrt{\tanh(\alpha)}>0$. 

The higher-order corrections of $\eta_S$ and $q_S$ (for $j>1$) take the form
\begingroup
\allowdisplaybreaks
\begin{subequations}
\begin{align}
    \eta_{j}(x) &= \begin{cases}  \quad \displaystyle \sum_{\substack{\ell = 2 \\\ell~\textrm{even}}}^{j}\hat{N}_{j,\ell}\cos(\ell x),  &\textrm{for} ~ j ~\textrm{even}, \\  \quad\displaystyle \sum_{\substack{\ell = 3 \\\ell~\textrm{odd}}}^{j}\hat{N}_{j,\ell}\cos(\ell x), &\textrm{for}~ j ~\textrm{odd}, \end{cases} \label{3a} \\
     q_{j}(x) &= \begin{cases} \quad \displaystyle \hat{Q}_{j,0}x+\sum_{\substack{\ell = 2 \\\ell~\textrm{even}}}^{j}\hat{Q}_{j,\ell}\sin(\ell x), &\textrm{for}~j ~\textrm{even}, \\ \quad \displaystyle \sum_{\substack{\ell = 3 \\\ell~\textrm{odd}}}^{j}\hat{Q}_{j,\ell}\sin(\ell x), &\textrm{for} ~j~ \textrm{odd}, \end{cases} \label{3b}
\end{align}
\end{subequations}
\endgroup
\no where $\hat{N}_{j,\ell}$ and $\hat{Q}_{j,\ell}$ depend only on the aspect ratio $\alpha$. Explicit representations of these coefficients as well as those for $c$ can be found in \cite{creedonDeconinckTrichtchenko2022} to $\mathcal{O}\left(\varepsilon^4\right)$. Taking their limits as $\alpha \rightarrow \infty$ gives the corresponding coefficients in infinite depth.
\begin{remark}
The expansion of $q_S$ contains terms proportional to $x$ due to the induced mean flow of the traveling frame, so $q_S$ is not periodic in general. Despite this, $q_{S,x}$ is periodic, which implies that the horizontal velocity at the surface is periodic.
\end{remark}

\subsection{The Stability Spectrum of Stokes Waves}

Extending our periodic domain to the whole real line, we perturb the Stokes waves using
\begin{align}
    \begin{pmatrix} \eta(x,t;\varepsilon,\rho) \\ q(x,t;\varepsilon,\rho) \end{pmatrix} = \begin{pmatrix} \eta_S(x;\varepsilon) \\ q_S(x;\varepsilon) \end{pmatrix} + \rho \begin{pmatrix} \eta_\rho(x,t) \\ q_\rho(x,t) \end{pmatrix} + \mathcal{O}\left(\rho^2\right), \label{4}
\end{align}
\no where $|\rho|\ll 1$ is a parameter independent of $\varepsilon$ and $\eta_\rho$ and $q_\rho$ are sufficiently smooth functions of $x$ and $t$ that are bounded over the real line for each $t\geq 0$. If \eqref{4} satisfies the AFM equations to $\mathcal{O}(\rho)$, then, since the Stokes waves are periodic \cite{creedonDeconinckTrichtchenko2022,deconinckOliveras2011}, 
\begin{align}
    \begin{pmatrix}
    \eta_{\rho}(x,t) \\
    q_{\rho}(x,t)
    \end{pmatrix} = e^{\lambda t+i\mu x}\begin{pmatrix}
    \mathcal{N}(x) \\
    \mathcal{Q}(x)
    \end{pmatrix} + c.c., \label{5}
\end{align}
\no where $c.c.$ denotes the complex conjugate of what precedes, $\mu \in (-1/2,1/2]$ is the Floquet exponent, and ${\bf w} = (\mathcal{N},\mathcal{Q})^T$ is a sufficiently smooth, $2\pi$-periodic eigenfunction of the spectral problem
\begin{align}
    \mathcal{L}_{\mu,\varepsilon}{\bf w}(x) = \lambda\mathcal{R}_{\mu,\varepsilon}{\bf w}(x), \label{6}
\end{align}
\no with corresponding eigenvalue $\lambda \in \mathbb{C}$.
In general, the linear operators $\mathcal{L}_{\mu,\varepsilon}$ and $\mathcal{R}_{\mu,\varepsilon}$ depend nonlinearly  on $\varepsilon$ and $\mu$. Explicitly, 
\begin{align}
\mathcal{L}_{\mu,\varepsilon} = \begin{pmatrix} \mathcal{L}_{\mu,\varepsilon}^{(1,1)}& \mathcal{L}_{\mu,\varepsilon}^{(1,2)} \\ \mathcal{L}_{\mu,\varepsilon}^{(2,1)} & \mathcal{L}_{\mu,\varepsilon}^{(2,2)} \end{pmatrix}, \quad  \mathcal{R}_{\mu,\varepsilon} = \begin{pmatrix} \mathcal{R}_{\mu,\varepsilon}^{(1,1)}& 0 \\ \mathcal{R}_{\mu,\varepsilon}^{(2,1)} & 1 \end{pmatrix},
\end{align} 

\vspace*{-0.3cm}

\begin{subequations}
\allowdisplaybreaks
\begin{flalign}
    \mathcal{L}_{\mu,\varepsilon}^{(1,1)}[\mathcal{N}(x)] &=  \sum_{n=-\infty}^{\infty}e^{inx} \mathcal{F}_n \big[c\, \mathcal{C}_{n+\mu}\mathcal{D}_x\mathcal{N}(x) +(n+\mu)\left(c\,\mathcal{S}_{n+\mu}\eta_{S,x} -i\mathcal{C}_{n+\mu}q_{S,x} \right)\mathcal{N}(x)\big], \\
   \mathcal{L}_{\mu,\varepsilon}^{(1,2)}[\mathcal{Q}(x)]  &=  \sum_{n=-\infty}^{\infty}e^{inx}\mathcal{F}_n\big[-i\mathcal{S}_{n+\mu}\mathcal{D}_x\mathcal{Q}(x) \big],\\
    \mathcal{L}_{\mu,\varepsilon}^{(2,1)}[\mathcal{N}(x)] &= \eta_{S,x}{\zeta}^2\mathcal{D}_x\mathcal{N}(x) - \mathcal{N}(x), \\
     \mathcal{L}_{\mu,\varepsilon}^{(2,2)}[\mathcal{Q}(x)] &= -{\zeta}\mathcal{D}_x\mathcal{Q}(x), \\
     \mathcal{R}_{\mu,\varepsilon}^{(1,1)}[\mathcal{N}(x)] &= \sum_{n=-\infty}^{\infty}e^{inx}\mathcal{F}_n\left[\mathcal{C}_{n+\mu}\mathcal{N}(x) \right], \\ \mathcal{R}_{\mu,\varepsilon}^{(2,1)}[\mathcal{N}(x)] &= -\eta_{S,x}{\zeta}\mathcal{N}(x),
\end{flalign}
\end{subequations}
\no where 
\begin{align} \mathcal{C}_k = \cosh(k(\eta_S+\alpha)), \quad \quad \mathcal{S}_k = \sinh(k(\eta_S+\alpha)), \quad \quad \mathcal{D}_x = i\mu+\partial_x, 
\quad \quad {\zeta} =\frac{q_{S,x}-  c}{1+\eta_{S,x}^2},
\end{align} 
\no and
\begin{align}
\mathcal{F}_n[f(x)] = \frac{1}{2\pi}\int_{-\pi}^{\pi} e^{-inx} f(x) dx,
\end{align}
\no for any $f(x) \in \textrm{L}^2_{\textrm{per}}(-\pi,\pi)$. The eigenvalues $\lambda \in \mathbb{C}$ of \eqref{6} correspond to the exponential growth rates of the perturbations $\eta_{\rho}$ and $q_{\rho}$. Consequently, we define the stability spectrum of a Stokes wave with amplitude $\varepsilon$ as the union of these eigenvalues over $\mu \in (-1/2,1/2]$. 

\begin{remark}
The eigenvalues of \eqref{6} for fixed $\mu \in (-1/2,0)$ are conjugate to the eigenvalues with Floquet exponent $-\mu$, so we may restrict $\mu \in [0,1/2]$ without loss of generality.
\end{remark}

\begin{remark}
In infinite depth, 
 \begin{subequations}
 \begin{flalign}
    \mathcal{L}_{\mu,\varepsilon}^{(1,1)}[\mathcal{N}(x)] &= \sum_{n=-\infty}^{\infty}\!\! e^{inx}\mathcal{F}_n\big[e^{|n+\mu|\eta_S}|n+\mu|\big(c\mathcal{D}_x\mathcal{N}(x) \!\!+\!\!\big(c\eta_{S,x}|n+\mu|-i(n+\mu)q_{S,x}\big)\mathcal{N}(x) \big) \big], \\
    \mathcal{L}_{\mu,\varepsilon}^{(1,2)}[\mathcal{Q}(x)] &= \sum_{n=-\infty}^{\infty} e^{inx}\mathcal{F}_n\big[e^{|n+\mu| \eta_S}\big(-i(n+\mu)\mathcal{D}_x\mathcal{Q}(x) \big)\big], \\
    \mathcal{R}_{\mu,\varepsilon}^{(1,1)}[\mathcal{N}(x)] &= \sum_{n=-\infty}^{\infty} e^{inx}\mathcal{F}_n\big[e^{|n+\mu|\eta_S}|n+\mu|\mathcal{N}(x)\big].
 \end{flalign}
 \end{subequations}
\no  All other operators remain the same as above.
\end{remark}

For generic choices of $\varepsilon$ and $\mu$, the spectral problem \eqref{6} can only be solved numerically, see \cite{deconinckOliveras2011}, for example. However, in the special case $\varepsilon = 0$, \eqref{6} reduces to a constant-coefficient problem,
    \begin{align}
    \begin{pmatrix} ic_0(\mu+D)\cosh(\alpha(\mu+D)) & (\mu+D)\sinh(\alpha(\mu+D)) \\ -1 & ic_0(\mu+D) \end{pmatrix} {\bf w_0}(x)& 
   \nonumber \\  & \hspace{-3.5cm}
    = \lambda_0\begin{pmatrix} \cosh(\alpha(\mu+D)) & 0 \\ 0 & 1 \end{pmatrix}{\bf w_0}(x),  \label{7}
\end{align} 
\no with $D=-i\partial_x$. We solve \eqref{7} to find
\begin{align}
   \lambda_{0} = -i\Omega_{\sigma}(\mu+n), \quad \quad \sigma = \pm 1, \quad \quad n \in \mathbb{Z},\label{8}
\end{align}
\no for 
\begin{align}
    \Omega_{\sigma}(k) =-c_0k +\sigma \omega(k), \quad \quad \omega(k) = \textrm{sgn}(k)\sqrt{k\tanh(\alpha k)}. \label{9}
\end{align}
\no Equation (\ref{9}) is the linear dispersion relation of the nondimensional AFM equations in a frame traveling at velocity $c_0$. The parameter $\sigma$ specifies the branch of this dispersion relation. Since both branches are real-valued, all eigenvalues $\lambda_0$ are imaginary, and the zero-amplitude Stokes waves are spectrally stable. 

For almost all $\mu$ and $n$, \eqref{8} is a simple eigenvalue with one-dimensional eigenspace spanned by
\begin{align}
    {\bf w_0}(x) = \begin{pmatrix} 1 \\ \dfrac{-i\sigma}{\omega(\mu+n)} \end{pmatrix}e^{inx}.
\end{align}
\no In order for $\lambda_0$ to enter the right-half plane and give rise to an instability for $0<\varepsilon \ll 1$, it must be non-simple. That is, at least two eigenvalues must collide at $\lambda_0$ for the eigenvalues to leave the imaginary axis \cite{deconinckTrichtchenko2017,mackaySaffman1986}. There are a countably infinite number of pairs $(\mu,n)$ for which this occurs~\cite{creedonDeconinckTrichtchenko2021b}. Most of these pairs correspond to eigenvalues away from the origin in the complex spectral plane that generate high-frequency instabilities for $0<\varepsilon\ll 1$ \cite{creedonDeconinckTrichtchenko2022,deconinckTrichtchenko2017,hurYang2020}. The pairs $(0,-1)$, $(0,0)$, and $(0,1)$, however, correspond to the eigenvalue at the origin. This eigenvalue generates the Benjamin-Feir instability in sufficiently deep water for $0 < \varepsilon \ll 1$ \cite{bertiMasperoVentura2021,bertiMasperoVentura2022,bridgesMielke1995,hurYang2020,nguyenStrauss2020}. 
\begin{remark}
In infinite depth, \eqref{7} becomes
\begin{equation}
    \begin{aligned}
    \begin{pmatrix}ic_0(\mu+D)|\mu+D| & (\mu+D)^2  \\ -1 & ic_0(\mu+D) \end{pmatrix} {\bf w_0}(x) = \lambda_0\begin{pmatrix} |\mu+D| & 0 \\ 0 & 1 \end{pmatrix}{\bf w_0}(x).
\end{aligned}  \label{10}
\end{equation}
\no The corresponding eigenvalues take the form \eqref{8} with
\begin{align}
    \Omega_{\sigma}(k) =-c_0k +\sigma \omega(k), \quad \quad \omega(k) = \textup{sgn}(k)\sqrt{|k|}. 
\end{align}
\end{remark}

\subsection{A Spectral Perturbation Method for the Benjamin-Feir Instability}

To reveal the structure in our perturbation calculations, we denote the zero eigenvalue of the $\varepsilon = 0$ spectrum and its corresponding zero Floquet exponent by $\lambda_0$ and $\mu_0$, respectively. However, since $\lambda_0 = 0$ and $\mu_0 = 0$, one can omit these parameters in what follows. 

The eigenvalue $\lambda_0$ is not simple with algebraic multiplicity 4 and geometric multiplicity 3 \cite{bertiMasperoVentura2021,bertiMasperoVentura2022,bridgesMielke1995,hurYang2020,nguyenStrauss2020}. The corresponding eigenspace is spanned by
\begin{align}
    {\bf w_{0,-1}}(x) = \begin{pmatrix} 1 \\ i/c_0 \end{pmatrix}e^{-ix}, \quad \quad {\bf w_{0,0}}(x) = \begin{pmatrix} 0 \\ 1 \end{pmatrix}, \quad \quad   {\bf w_{0,1}}(x) = \begin{pmatrix} 1 \\ -i/c_0 \end{pmatrix}e^{ix}. \label{11}
\end{align}
\no The most general eigenfunction corresponding to $\lambda_0$ is 
\begin{align}
    {\bf w_0}(x) = \beta_{0,-1}{\bf w_{0,-1}}(x) + \beta_{0,0}{\bf w_{0,0}}(x) + \beta_{0,1}{\bf w_{0,1}}(x), \label{11.5}
\end{align}
\no where $\beta_{0,\nu}$ for $\nu \in \{0,\pm 1\}$ are (for now) arbitrary constants. 

\begin{remark}
The generalized eigenspace is spanned by \eqref{10} together with the generalized eigenvector ${\bf v_{0,0}} = (1,0)^T$. We mention this for completeness, but in practice, we only need \eqref{11} to approximate the unstable eigenvalues corresponding to the Benjamin-Feir instability.
\end{remark}

\begin{remark}
Without loss of generality, one of the constants $\beta_{0,\nu}$ can be set to 1 since ${\bf w}_0$ is unique only up to a nonzero scalar. We retain all three constants in our calculations for reasons that become more clear when we consider infinite depth.
\end{remark}

We turn on the $\varepsilon$ parameter and track the unstable eigenvalues near the origin for $0 < \varepsilon \ll 1$. These eigenvalues trace out a figure-eight curve centered at the origin, as mentioned in the Introduction. To track these eigenvalues and their corresponding eigenfunctions, we formally expand in powers of $\varepsilon$:
\begin{subequations}
\begin{align}
    \lambda(\varepsilon) &= \lambda_0 + \sum_{j=1}^{\infty}\lambda_j \varepsilon^j, \label{12a} \\
    {\bf w}(x;\varepsilon) &= {\bf w_0}(x) + \sum_{j=1}^{\infty}{\bf w_j}(x) \varepsilon^j. \label{12b}
\end{align}
\end{subequations} 
\no As the curve deforms with $\varepsilon$, so too does its parameterization in terms of the Floquet exponent. Hence, as in \cite{creedonDeconinckTrichtchenko2022}, we expand this parameter as well, writing
\begin{align}
    \mu(\varepsilon) = \mu_0+\varepsilon\mu_1\big(1+r(\varepsilon)\big), \quad \quad \textrm{with} \quad \quad r(\varepsilon) = \sum_{j=1}^{\infty}r_j\varepsilon^j, \label{13}
\end{align}
\no where $\mu_1$ assumes an interval of values symmetric about zero and $r(\varepsilon)$ captures the higher-order deformations of this interval, see Figure \ref{fig4}.

\begin{remark}
For sufficiently large $\varepsilon$, the figure-eight curve separates from the origin (Figure \ref{fig3}), and thus, the parameterizing interval of Floquet exponents separates into two disjoint intervals. Ansatz \eqref{13} cannot account for this effect, which limits our analysis of the Benjamin-Feir instability spectrum to sufficiently small $\varepsilon$.
\end{remark}

We proceed as follows. Expansions \eqref{12a}, \eqref{12b}, and \eqref{13} are substituted into the full spectral problem \eqref{6}. Powers of $\varepsilon$ are equated, generating a hierarchy of linear inhomogeneous equations for the eigenfunction corrections ${\bf w_j}$. Each of these equations is solvable only if the Fredholm alternative removes secular inhomogeneous terms. This leads to a set of solvability conditions that impose constraints on the eigenvalue corrections $\lambda_j$ as well as corrections to the constants appearing in ${\bf w_0}$. Corrections to the Floquet exponent require an additional constraint called the \emph{regular curve condition} \cite{creedonDeconinckTrichtchenko2021a,creedonDeconinckTrichtchenko2021b,creedonDeconinckTrichtchenko2022}, which ensures the eigenvalue corrections remain bounded as one approaches the intersection of the figure-eight curve with the imaginary axis.

We present this method in more detail in the sections that follow, first in finite depth and then in infinite depth, where more care is needed. Results of the method are compared directly with numerical computations of the Benjamin-Feir instability using methods presented in \cite{deconinckKutz2006,deconinckOliveras2011}. To our knowledge, this is the first time that analytical and numerical descriptions of the Benjamin-Feir figure-eight curve have been quantitatively compared. 

\begin{figure}[tb]
    \centering
    \includegraphics[height=5.2cm,width=7.2cm]{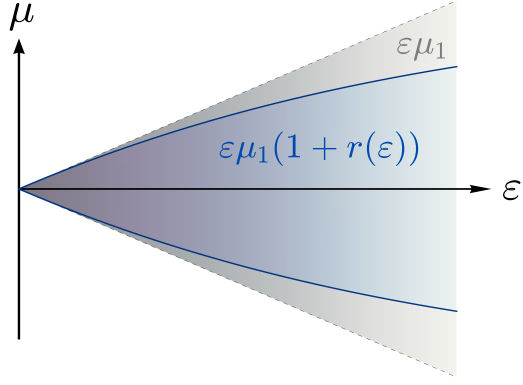} \vspace*{2mm}
    \caption{\small A schematic of the parameterizing interval of Floquet exponents for the Benjamin-Feir figure-eight curve as a function of $\varepsilon$. The gray-shaded region indicates the leading-order approximation of this interval $\varepsilon \mu_1$, where $\mu_1$ is an interval of values symmetric about zero. The blue-shaded region indicates the true interval as a function of $\varepsilon$ and is a uniform rescaling of the leading-order behavior by a factor of $1+r(\varepsilon)$, where $r$ is an analytic function of $\varepsilon$ such that $r(\varepsilon) = o(1)$ as $\varepsilon \rightarrow 0^+$. The boundaries of the true interval may be subtended by curves that are concave up or down, depending on $\alpha$.}
    \label{fig4}
\end{figure}

\section{The Benjamin-Feir Spectrum in Finite Depth}

\subsection{The $\mathcal{O}(\varepsilon)$ Problem}

Substituting \eqref{12a}, \eqref{12b}, and \eqref{13} into the full spectral problem \eqref{6}, terms of $\mathcal{O}(\varepsilon^0)$ necessarily cancel by our choice of $\lambda_0$, ${\bf w_0}$, and $\mu_0$ above. At $\mathcal{O}(\varepsilon)$, we find 
\begin{align}
    \Big(\mathcal{L}_0 - \lambda_0\mathcal{R}_0\Big){\bf w_1} = -\mathcal{L}_1{\bf w_0} + \mathcal{R}_0\Big(\lambda_1{\bf w_0} \Big) + \mathcal{R}_1\Big(\lambda_0{\bf w_0} \Big), \label{14}
\end{align}
\no with
\begin{align}
    \mathcal{L}_j = \frac{1}{j!}\frac{\partial^j}{\partial \varepsilon^j} \mathcal{L}_{\mu(\varepsilon),\varepsilon}, \quad \quad \mathcal{R}_j = \frac{1}{j!}\frac{\partial^j}{\partial \varepsilon^j} \mathcal{R}_{\mu(\varepsilon),\varepsilon}. 
\end{align}
\no The operator $\mathcal{L}_0-\lambda_0\mathcal{R}_0$ is not invertible for $(\lambda_0, \mu_0)=(0,0)$. A solution ${\bf w_1}$ of 
\eqref{14} exists only if the inhomogeneous terms are orthogonal to the nullspace of the adjoint of $\mathcal{L}_0-\lambda_{0}\mathcal{R}_0$ by the Fredholm alternative. A direct calculation shows
\begin{align}
    \textrm{Null}\left(\left(\mathcal{L}_0-\lambda_{0}\mathcal{R}_0\right)^\dagger\right) = \textrm{Span}\left\{ \begin{pmatrix} 1 \\ ic_0 \end{pmatrix} e^{-ix}, \begin{pmatrix} 1 \\ 0 \end{pmatrix},\begin{pmatrix} 1 \\ -ic_0 \end{pmatrix} e^{ix}  \right\}, \label{15}
\end{align}
\no where $\left(\mathcal{L}_0-\lambda_{0}\mathcal{R}_0\right)^{\dagger}$ denotes the adjoint operator with respect to the standard complex inner-product $\left<\cdot,\cdot\right>$ on $\textrm{L}^2_{\textrm{per}}(-\pi,\pi)\times \textrm{L}^2_{\textrm{per}}(-\pi,\pi)$. From \eqref{15}, we arrive at three solvability conditions for \eqref{14}: 
\begin{subequations}
\begin{align}
    \left<  -\mathcal{L}_1{\bf w_0} + \mathcal{R}_0\Big(\lambda_1{\bf w_0} \Big) + \mathcal{R}_1\Big(\lambda_0{\bf w_0} \Big),\begin{pmatrix} 1 \\ ic_0 \end{pmatrix} e^{-ix}\right> &= 0, \label{16a} \\
    \left<  -\mathcal{L}_1{\bf w_0} + \mathcal{R}_0\Big(\lambda_1{\bf w_0} \Big) + \mathcal{R}_1\Big(\lambda_0{\bf w_0} \Big),\begin{pmatrix} 1 \\ 0 \end{pmatrix}\right> &= 0, \label{16b} \\
    \left< -\mathcal{L}_1{\bf w_0} + \mathcal{R}_0\Big(\lambda_1{\bf w_0} \Big) + \mathcal{R}_1\Big(\lambda_0{\bf w_0} \Big),\begin{pmatrix} 1 \\ -ic_0 \end{pmatrix} e^{ix} \right> &= 0. \label{16c}
\end{align}
\end{subequations}
\no Simplifying \eqref{16a} and \eqref{16c} leads to
\begin{subequations}
\begin{align}
    2\beta_{0,-1}\left(\lambda_1 + i\mu_1c_g\right) &= 0, \label{17a} \\
    2\beta_{0,1}\left(\lambda_1+i\mu_1c_g\right) &= 0,\label{17b}
\end{align}
\end{subequations}
\no respectively, where $c_g$ denotes the group velocity of $\Omega_{1}$ \eqref{9} evaluated at $k=1$. Explicitly,
\begin{align}
c_g &= \frac{\alpha(1-c_0^4)-c_0^2}{2c_0}.
\end{align}
\no In contrast to \eqref{16a} and \eqref{16c}, \eqref{16b} reduces to a trivial equality and does not contribute an additional solvability condition.
\begin{remark}
If $c_0>0$, a direct calculation shows that $c_g<0$ for $\alpha>0$. 
\end{remark}
If we require $\beta_{0,\nu} \neq 0$ so that the eigenspace of $\lambda_0$ remains three-dimensional, \eqref{17a} and \eqref{17b} imply
\begin{align}
    \lambda_1 = -i\mu_1c_g. \label{18}
\end{align}
\no Since $\mu_1 \in \mathbb{R}$, the unstable eigenvalues of the Benjamin-Feir instability are imaginary to $\mathcal{O}(\varepsilon)$. 
\begin{remark}
If $\beta_{0,\nu} = 0$ for some $\nu$, one recovers the imaginary spectrum near the origin, as opposed to the figure-eight curve. 
\end{remark}

Before proceeding to $\mathcal{O}\left(\varepsilon^2 \right)$, we solve \eqref{14} subject to \eqref{18}. The solution ${\bf w_1}$ decomposes into a direct sum of a particular solution ${\bf w_{1,p}}$ and a homogeneous solution ${\bf w_{1,h}}$:
\begin{align}
    {\bf w_1}(x) = {\bf w_{1,p}}(x)+{\bf w_{1,h}}(x).
\end{align}
\no The particular solution can be written as
\begin{align}
    {\bf w_{1,p}}(x) = \sum_{j=-2}^{2} {\bf w_{1,j}}e^{ijx},
\end{align}
\no where ${\bf w_{1,j}} = {\bf w_{1,j}}(\alpha,\beta_{0,\nu},\mu_1)\in \mathbb{C}^2$. The homogeneous solution is
\begin{align}
    {\bf w_{1,h}}(x) = \beta_{1,-1}{\bf w_{0,-1}}(x) + \beta_{1,0}{\bf w_{0,0}}(x) + \beta_{1,1}{\bf w_{0,1}}(x),
\end{align}
\no coinciding with the eigenspace of $\lambda_0$. The coefficients $\beta_{1,\nu}$ represent first-order corrections to the zeroth-order eigenfunction correction ${\bf w}_0$ and are undetermined constants at this order.

\begin{remark}
The expressions for ${\bf w_{1,j}}$ as well as for all other algebraic expressions that are too cumbersome to include explicitly in this manuscript are found in our companion Mathematica files. See the Data Availability Statement at the end of this manuscript for access to these files. 
\end{remark}

\subsection{The $\mathcal{O}\left(\varepsilon^2\right)$ Problem}

At $\mathcal{O}\left(\varepsilon^2\right)$, the spectral problem \eqref{6} takes the form
\begin{align}
     \Big(\mathcal{L}_0 - \lambda_0\mathcal{R}_0\Big){\bf w_2} = -\sum_{j=1}^{2}\mathcal{L}_j{\bf w_{2-j}} + \mathcal{R}_0\Big(\sum_{k=1}^{2}\lambda_k{\bf w_{2-k}} \Big) + \sum_{j=1}^{2}\mathcal{R}_j\Big(\sum_{k=0}^{2-j} \lambda_{k}{\bf w_{2-j-k}} \Big). \label{19}
\end{align}
\no Proceeding as above, we obtain three nontrivial solvability conditions for \eqref{19}:
\begin{subequations}
\begin{align}
    2\beta_{0,-1}\big(\lambda_2 + ic_gr_1\mu_1 \big) + \beta_{0,0}S_{2,-1}\mu_1 + i\Big(U_{2,-1}\beta_{0,1} + \big(T_{2,-1}\mu_1^2 + V_{2,-1} \big)\beta_{0,-1} \Big) &= 0, \label{20a} \\
    \beta_{0,0}T_{2,0}\mu_1^2 + iS_{2,0}\mu_1\big(\beta_{0,-1}+\beta_{0,1} \big) &= 0 \label{20b}, \\
    2\beta_{0,1}\big(\lambda_2+ic_gr_1\mu_1\big)+\beta_{0,0}S_{2,1}\mu_1 + i\Big(U_{2,1}\beta_{0,-1} + \big(T_{2,1}\mu_1^2+V_{2,1}\big)\beta_{0,1}\Big) &= 0 \label{20c},
\end{align} 
\end{subequations}
\no where the subscripted coefficients $S$, $T$, $U$, and $V$ are all real-valued functions of the aspect ratio $\alpha$. More explicitly, we have:
\begin{subequations}
\begin{align}
    S_{2,-1} &= \frac{\alpha+5c_0^2-2\alpha c_0^4-c_0^6+\alpha c_0^8}{4c_0^2}, \label{20.1a} \\
    T_{2,-1} &= \frac{\alpha^2-c_0^2\left(-1+\alpha c_0^2 \right)\left(-2\alpha +c_0^2+3\alpha c_0^4 \right)}{4c_0^3}, \label{20.1b}\\
    U_{2,-1} &= \frac{1-2\hat{N}_{2,2}c_0^2\left(1-3c_0^4\right)-8\hat{Q}_{2,2}c_0^3-4c_0^4+c_0^8}{4c_0^3}, \\
    T_{2,0} &= \frac{\alpha^2-2\alpha c_0^2+\left(1-2\alpha^2 \right)c_0^4 -2\alpha c_0^6+\alpha^2c_0^8}{4c_0^2}, \label{20.1c}
\end{align}
\end{subequations}
\no where $\hat{N}_{2,2}$ and $\hat{Q}_{2,2}$ are the second-order corrections of the Stokes waves, see Subsection 2.2 and Appendix A of \cite{creedonDeconinckTrichtchenko2022} for more details. The remaining coefficients follow from the identities 
\begin{equation}
\begin{aligned}
S_{2,-1}=-S_{2,1}, ~  T_{2,-1}=-T_{2,1}, ~U_{2,-1}=-U_{2,1},
~ V_{2,-1}=U_{2,-1}, ~  S_{2,-1}=c_0S_{2,0},
\end{aligned} \label{21}
\end{equation}
\no which hold for $\alpha > 0$. The proofs of these identities are shown in the companion Mathematica files. 


\begin{theorem} For all $\alpha>0$, we have $S_{2,-1}>0$, $T_{2,-1} >0$, and $T_{2,0}<0$. 
\end{theorem}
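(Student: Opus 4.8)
The plan is to reduce each of the three claims to the sign of the corresponding numerator in \eqref{20.1a}, \eqref{20.1b}, and \eqref{20.1c}, since for $\alpha>0$ the denominators $4c_0^2$ and $4c_0^3$ are positive (recall $c_0=\sqrt{\tanh\alpha}>0$), and then to exploit the two elementary facts $0<c_0^2=\tanh\alpha<1$ and $\tanh\alpha<\alpha$. For $S_{2,-1}$ this is immediate: collecting the numerator of \eqref{20.1a} by powers of $\alpha$ and recognizing the even powers of $c_0$ as a perfect square rewrites it as $\alpha\,(1-c_0^4)^2 + c_0^2\,(5-c_0^4)$, and both summands are manifestly positive because $0<c_0<1$ gives $1-c_0^4>0$ and $5-c_0^4>4>0$. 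Hence $S_{2,-1}>0$ with no further input.

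For $T_{2,-1}$, I would expand the product $(-1+\alpha c_0^2)(-2\alpha+c_0^2+3\alpha c_0^4)$ in \eqref{20.1b} and regroup the numerator as a quadratic in $\alpha$ with coefficients polynomial in $s:=c_0^2\in(0,1)$, obtaining $(1-s^2)(1+3s^2)\,\alpha^2 - 2s(1-s^2)\,\alpha + s^2$. Its leading coefficient is positive and its discriminant collapses to $-16s^4(1-s^2)<0$; equivalently, completing the square produces $(1-s^2)(1+3s^2)\big(\alpha-\tfrac{s}{1+3s^2}\big)^2+\tfrac{4s^4}{1+3s^2}$, a sum of two positive terms. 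Either way the numerator is strictly positive for every real $\alpha$, so $T_{2,-1}>0$.

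For $T_{2,0}$, I would again regroup the numerator of \eqref{20.1c} as an upward-opening quadratic in $\alpha$, namely $(1-s^2)^2\,\alpha^2 - 2s(1+s^2)\,\alpha + s^2$ with $s=c_0^2$. This time the discriminant equals $+16s^4$, and the two roots factor cleanly as
\[
\alpha_\pm \;=\; \frac{s(1\pm s)^2}{(1-s^2)^2}\;=\;\frac{s}{(1\mp s)^2},\qquad 0<\alpha_-<\alpha_+,
\]
so the numerator is negative exactly on the open interval $\alpha\in(\alpha_-,\alpha_+)$. The remaining, and only genuinely non-algebraic, step is to verify that the true $\alpha$ — coupled to $s$ through $s=\tanh\alpha$ — lies strictly inside this interval. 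Substituting $u=e^{2\alpha}>1$ so that $s=(u-1)/(u+1)$, one computes $\alpha_- = (u^2-1)/(4u^2) = (1-e^{-4\alpha})/4$ and $\alpha_+ = (u^2-1)/4 = (e^{4\alpha}-1)/4$, whence $\alpha>\alpha_-$ and $\alpha<\alpha_+$ are equivalent to $e^{-4\alpha}>1-4\alpha$ and $e^{4\alpha}>1+4\alpha$, respectively, both instances of $e^x>1+x$ for $x\neq0$. This gives $T_{2,0}<0$ and finishes the proof.

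The regroupings and discriminant evaluations are routine and can be cross-checked against the companion Mathematica files, as can the identities \eqref{21}; the part I expect to require the only real care is the root-location step for $T_{2,0}$, precisely because it is the one place where the transcendental relation $c_0^2=\tanh\alpha$ must be fed in, and one needs the right substitution ($u=e^{2\alpha}$) to see that it reduces to the convexity of the exponential rather than to a looser bound such as $\tanh\alpha<\alpha$ alone, which controls only one of the two endpoints.
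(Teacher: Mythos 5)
Your proof is correct, and it takes a somewhat different route from the paper's. The paper substitutes $c_0^2=\tanh\alpha$ at the outset and rewrites each numerator in terms of hyperbolic functions of $\alpha$: $S_{2,-1}$ becomes a manifestly positive combination $2\alpha+3\sinh(2\alpha)+\sinh(4\alpha)$ (times a positive prefactor), $T_{2,-1}$ is handled via the Taylor series of $\cosh$ together with $\tanh|k|<|k|$, and $T_{2,0}$ is factored as a negative multiple of $\bigl(e^{4\alpha}-(1+4\alpha)\bigr)\bigl(e^{-4\alpha}-(1-4\alpha)\bigr)$, after which $e^{k}>1+k$ finishes. You instead keep $\alpha$ and $s=c_0^2$ as (nearly) independent variables and argue polynomially: the regroupings $\alpha(1-c_0^4)^2+c_0^2(5-c_0^4)$ and $(1-s^2)(1+3s^2)\alpha^2-2s(1-s^2)\alpha+s^2$ are correct (I verified the expansions and the discriminant $-16s^4(1-s^2)$), so the first two claims need only $0<s<1$ — strictly less information than the paper uses, since your quadratic for $T_{2,-1}$ is positive for \emph{all} real $\alpha$, not just the $\alpha$ coupled to $s$ through $\tanh$. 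For $T_{2,0}$ your root computation $\alpha_\pm=s/(1\mp s)^2$ and the substitution $u=e^{2\alpha}$ are also correct, and they land on exactly the two inequalities $e^{\pm4\alpha}>1\pm4\alpha$ that underlie the paper's factorization; indeed your $(1-s^2)^2(\alpha-\alpha_-)(\alpha-\alpha_+)$ is the paper's product of exponential factors in disguise. What your version buys is a clear separation of which claims are purely algebraic in $(\alpha,s)\in(0,\infty)\times(0,1)$ and which genuinely require the dispersion relation $s=\tanh\alpha$; what the paper's version buys is shorter closed forms that are visibly sign-definite once written down. Both are complete proofs.
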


\begin{proof}
Substituting $c_0 = \sqrt{\tanh(\alpha)}$ in \eqref{20.1a}, we arrive at
\begin{align}
    T_{2,-1} = \frac18 \csch(\alpha)\sech^3(\alpha)\left(2\alpha+3\sinh(2\alpha)+\sinh(4\alpha)\right),
\end{align}
\no from which $S_{2,-1}>0$ follows immediately for $\alpha >0$.

Doing the same for \eqref{20.1b}, we arrive at
\begin{align}
    T_{2,-1} = \frac{-1-4\alpha^2+8\alpha^2\cosh(2\alpha)+\cosh^2(2\alpha)-4\alpha\sinh(2\alpha)}{16\tanh^{3/2}(\alpha)\cosh^4(\alpha)},
\end{align}
\no after some work. 
Rearranging terms in the numerator,
\begin{align}
    T_{2,-1} = \frac{\left(-1-4\alpha^2+\cosh^2(2\alpha) \right)+4\alpha\cosh(2\alpha)\left(2\alpha-\tanh(2\alpha) \right)}{16\tanh^{3/2}(\alpha)\cosh^4(\alpha)}.
\end{align}
\no Using the Taylor series of $\cosh$, we have $-1-4\alpha^2+\cosh^2(2\alpha)>0$ immediately. Using the well-known bound $\tanh(|k|)<|k|$ for $k \in \mathbb{R}$, we have $2\alpha-\tanh(2\alpha)>0$. It follows that $T_{2,-1}>0$ for $\alpha >0$.

Lastly, for \eqref{20.1c}, 
\begin{align}
    T_{2,0} &= -\frac{1}{64}\csch(\alpha)\sech^3(\alpha)\left(e^{4\alpha}-(1+4\alpha) \right)\left(e^{-4\alpha}-(1-4\alpha) \right),
\end{align}
\no after some work. Using $\exp(k)>1+k$ for $k>0$, we immediately conclude $T_{2,0}<0$ for $\alpha > 0$, as desired.
\end{proof}

Equations \eqref{20a}-\eqref{20c} constitute a nonlinear system for the unknown variables $\lambda_2$ and $\beta_{0,\pm 1}$. The first-order Floquet correction $\mu_1$ and first-order rescaling of the Floquet interval $r_1$ appear as parameters in this system. Because of the symmetry of the Floquet interval corresponding to the Benjamin-Feir figure-eight curve, we consider $\mu_1>0$ without loss of generality, as mentioned before. Also appearing as a parameter in our system is  $\beta_{0,0}$, the coefficient of the zeroth mode of ${\bf w_0}$. Without loss of generality, we normalize the eigenfunction ${\bf w}$ so that $\beta_{0,0}>0$. Under these assumptions, we solve \eqref{20a}-\eqref{20c} for $\lambda_2$. Using the identities listed in \eqref{21} as well as the inequalities in the claim above, we find 
\begin{align}
    \lambda_2 &= \lambda_{2,R}+i\lambda_{2,I},
\end{align}
\no where
\begin{subequations}
\begin{align}
    \lambda_{2,R} &= \pm \frac{\mu_1}{2}\sqrt{T_{2,-1}\left(\frac{2\left(S_{2,-1}S_{2,0}-U_{2,-1}T_{2,0} \right)}{T_{2,0}}-T_{2,-1}\mu_1^2\right)}, \label{22a} \\
    \lambda_{2,I} &= - r_1\mu_1c_g. \label{22b}
\end{align}
\end{subequations}
\no Defining
\begin{subequations}
\begin{align}
    e_{2} = 4T_{2,-1}, \quad &e_{BW} = \frac{S_{2,-1}S_{2,0}-U_{2,-1}T_{2,0}}{T_{2,0}}, \quad \textrm{and} \\ \Delta_{BW} &= \sqrt{e_{2}\left(8e_{BW}-e_{2}\mu_1^2\right)}, \end{align}
 \end{subequations}
\no \eqref{22a} simplifies to
\begin{align}
    \lambda_{2,R} = \pm \frac{\mu_1}{8}\Delta_{BW}. \label{23}
\end{align}
\no For \eqref{23} to be nonzero, we must have $e_{BW}>0$. It is well-known (see, for instance, \cite{bertiMasperoVentura2022}) that $e_{BW}>0$ only if $\alpha > \alpha_{BW} = 1.3627827567...$, the critical threshold for modulational instability originally found in \cite{benjamin1967,whitham1967b}. A plot of $e_{BW}$ as a function of $\alpha$ is shown in Figure \ref{fig5}.

\begin{remark}
The variables $e_2$ and $e_{BW}$ correspond directly to the variables $e_{22}$ and $e_{WB}$ in \cite{bertiMasperoVentura2022}, respectively. Using the expressions for $S_{2,-1}$, $U_{2,-1}$, $S_{2,0}$, and $T_{2,0}$ above, we obtain an explicit representation of $e_{BW}$:
\begin{align}
    e_{BW} &=\frac{1}{\left(-1+8\alpha^2+\cosh(4\alpha)-4\alpha\sinh(4\alpha)\right)\tanh^{3/2}(\alpha)}\biggr(-4+8\alpha^2+8\cosh(2\alpha)\phantom) \nonumber \\
    &\phantom(\quad\quad\quad+5\cosh(4\alpha) +2\alpha\Big(-9\coth(\alpha)+18\alpha\csch^2(2\alpha)-2\sinh(4\alpha)+3\tanh(\alpha) \Big)\biggr).\label{23.1}
\end{align}
\no The root of this expression for $\alpha > 0$ is the critical threshold $\alpha_{BW}$.
\end{remark}

\begin{figure}[tb]
    \centering \hspace*{1cm}
    \includegraphics[height=5.2cm,width=8.4cm]{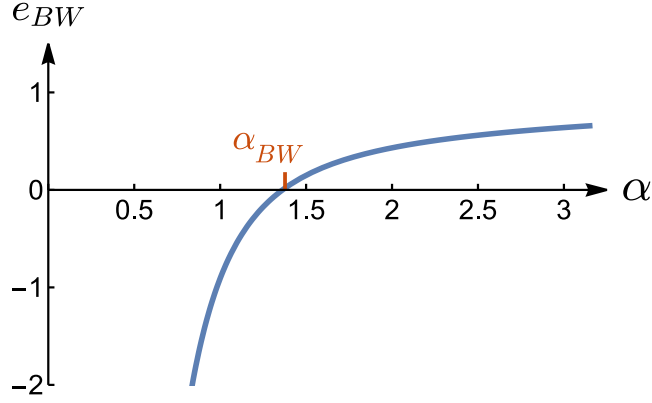} \vspace*{2mm}
    \caption{\small A plot of $e_{BW}$ \emph{vs.} $\alpha$. The only root of $e_{BW}$ for $\alpha>0$ is $\alpha_{BW} = 1.3627827567...$.}
    \label{fig5}
\end{figure}

Provided $\alpha>\alpha_{BW}$, \eqref{23} has nonzero real part for
\begin{align}
    0<\mu_1 <M, \quad M =  \sqrt{\frac{8e_{BW}}{e_{2}}}. \label{24}
\end{align}
\no Inequality \eqref{24} together with the first-order eigenvalue correction \eqref{18} and second-order eigenvalue corrections \eqref{22b} and \eqref{23} yield the leading-order parameterization for one loop of the figure-eight curve. Because $c_g<0$ for all $\alpha >0$, this loop is in the upper-half complex plane. The remaining loop is obtained if one repeats the analysis above for $\mu_1<0$. One finds $-M < \mu_1 < 0$ necessarily, so that the parameterizing interval of Floquet exponents for the entire figure-eight curve has the asymptotic expansion
\begin{align}
    \mu \in  \varepsilon\left(-M,M\right)\Big(1+r_1\varepsilon\Big) + \mathcal{O}\left(\varepsilon^3\right). \label{24.5}
\end{align}

\begin{remark}
For the remainder of this work, we restrict to the positive branch of \eqref{23} and, therefore, obtain a parameterization only for the half loop of the figure-eight curve in the first quadrant of the complex plane. By quadrafold symmetry of the stability spectrum \eqref{6}, we can recover a parameterization for the entire figure-eight curve from this half-loop. 
\end{remark}


Both \eqref{22b} and \eqref{24.5} depend on the first-order rescaling parameter $r_1$. This results in ambiguity at $\mathcal{O}\left(\varepsilon^2\right)$ in both the Floquet parameterization and imaginary part of the figure-eight. We show at the next order that $r_1 = 0$ using the regular curve condition. Using this, we can assemble our expansions for the real and imaginary parts of the figure-eight curve
\begin{subequations}
\begin{align}
    \lambda_{R} &= \frac{\mu_1}{8}\Delta_{BW}\varepsilon^2 + \mathcal{O}\left(\varepsilon^3\right), \label{24.1} \\
    \lambda_{I} &= -\mu_1c_g\varepsilon + \mathcal{O}\left(\varepsilon^3\right), \label{24.2}
\end{align}
\end{subequations}
\no respectively. Dropping terms of $\mathcal{O}\left(\varepsilon^3\right)$ and smaller and eliminating the $\mu_1$ dependence, we obtain the algebraic curve 
\begin{align}
    64c_g^4\lambda_R^2=e_{2}\lambda_{I}^2\biggr(8e_{BW}c_g^2\varepsilon^2-e_{2}\lambda_{I}^2 \biggr),
\end{align}
\no which is a lemniscate of Huygens (or Gerono) \cite{huygens1691}. This lemniscate represents a uniformly accurate asymptotic approximation of the Benjamin-Feir figure-eight curve to $\mathcal{O}\left(\varepsilon^2\right)$ and is consistent with the low-order heuristic approximation presented in \cite{bertiMasperoVentura2022}. For sufficiently small $\varepsilon$, this lemniscate agrees well with numerical results, see Figure \ref{fig6}.

\begin{figure}[tb]
    \centering \hspace*{-0.0cm}
    \includegraphics[height=7.2cm,width=15cm]{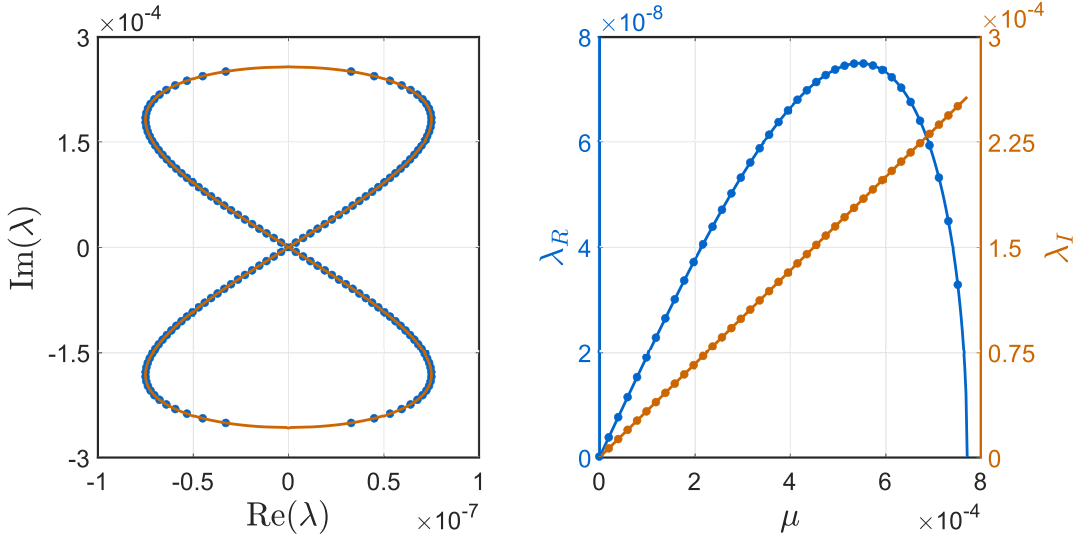} \vspace*{2mm}
    \caption{\small (Left) A plot of the Benjamin-Feir figure-eight curve for a Stokes wave with amplitude $\varepsilon = 10^{-3}$ and aspect ratio $\alpha = 1.5$. Numerical results are given by the blue dots, and the asymptotic results to $\mathcal{O}\left(\varepsilon^2\right)$ are given by the solid orange curve. (Right) The Floquet parameterization of the real (blue axis) and imaginary (orange axis) part of the figure-eight curve on the left. The respective numerical results are given by the correspondingly colored dots, and the respective asymptotic results to $\mathcal{O}\left(\varepsilon^2\right)$ are given by the correspondingly colored curves.}    \label{fig6}
\end{figure}

Given the asymptotic expansion of $\lambda_R$ in \eqref{24.1} above, a direct calculation shows that $\lambda_{R}$ attains the maximum value 
\begin{align}
    \lambda_{R,*} = \frac{e_{BW}}{2}\varepsilon^2 + \mathcal{O}\left(\varepsilon^3\right),
\end{align}
\no when $\mu_1$ is equal to
\begin{align}
    \mu_{1,*} = 2\sqrt{\frac{e_{BW}}{e_{2}}}. \label{24.75}
\end{align}
\no This gives an asymptotic expansion for the real part of the most unstable eigenvalue on the half-loop. Its corresponding imaginary part and Floquet exponent are
\begin{subequations}
\begin{align}
  \lambda_{I,*} &= -c_g\left(2\sqrt{\frac{e_{BW}}{e_{2}}}\right)\varepsilon+\mathcal{O}\left(\varepsilon^2\right), \label{26b} \\
     \mu_* &= \left(2\sqrt{\frac{e_{BW}}{e_{2}}}\right)\varepsilon + \mathcal{O}\left(\varepsilon^2\right), \label{26a}
\end{align}
\end{subequations}
\no respectively. These expansions agree with numerical computations up to $\varepsilon = 10^{-2}$, see Figure~\ref{fig7}.

\begin{figure}[tb]
    \centering \hspace*{-0.5cm}
    \includegraphics[height=7.2cm,width=15.5cm]{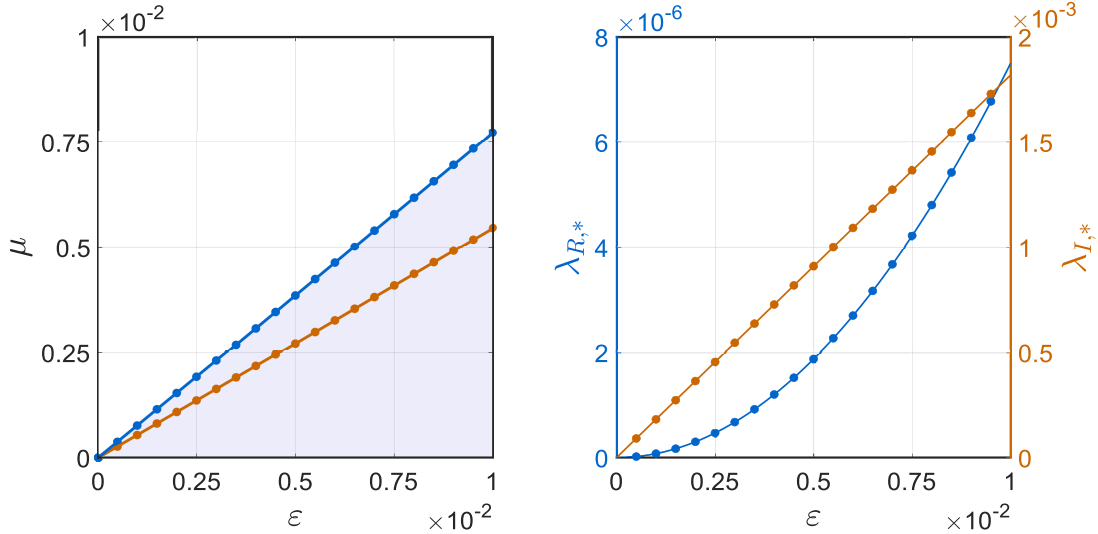} \vspace*{2mm}
    \caption{\small (Left) The interval of Floquet exponents parameterizing the half-loop of the Benjamin-Feir figure-eight curve for a Stokes wave with aspect ratio $\alpha = 1.5$ and variable amplitude $\varepsilon$. The numerically computed boundary of this interval is given by the blue dots, while the solid blue curve gives the asymptotic results to $\mathcal{O}\left(\varepsilon^2\right)$. The orange dots give the numerically computed Floquet exponents of the most unstable eigenvalue, while the solid orange curves give the corresponding asymptotic results to $\mathcal{O}\left(\varepsilon\right)$. (Right) The real (blue axis) and imaginary (orange axis) part of the most unstable eigenvalue with $\alpha = 1.5$ and variable $\varepsilon$. Numerical results are given by the correspondingly colored dots, and the asymptotic results for the real and imaginary part to $\mathcal{O}\left(\varepsilon^2\right)$ and $\mathcal{O}\left(\varepsilon\right)$, respectively, are given by the correspondingly colored solid curves.}    \label{fig7}
\end{figure}

\begin{remark}
As we will see in Subsection 3.4, the first-order Floquet correction $\mu_{1,*}$ corresponding to the most unstable eigenvalue is given by a power series in $\varepsilon$. Equation \eqref{24.75} gives the leading-order term in this series. At this point, we do not know the higher-order corrections of this series and, as a result, are unable to predict the second-order terms of \eqref{26b} and \eqref{26a}. This is a common feature in our analysis of the most unstable eigenvalue: $\lambda_{R,*}$ is determined to one order higher in $\varepsilon$ than $\lambda_{I,*}$ and $\mu_{1,*}$.
\end{remark}

 To conclude our discussion of the $\mathcal{O}\left(\varepsilon^2\right)$ problem, we solve for the remaining unknowns in \eqref{20a}-\eqref{20c} and obtain
\begin{subequations}
\begin{align}
    \beta_{0,-1} &= \frac{\left(ie_{2}\mu_1 \mp \Delta_{BW} \right)T_{2,0}\beta_{0,0}}{2e_{2}S_{2,0}}, \label{26.5a} \\
    \beta_{0,1} &= \frac{\left(ie_{2}\mu_1 \pm \Delta_{BW} \right)T_{2,0}\beta_{0,0}}{2e_{2}S_{2,0}}. \label{26.5b}
\end{align}
\end{subequations}
\no Since we have chosen the positive branch of \eqref{22b}, the negative branch is chosen for \eqref{26.5a}, and the positive branch is chosen for \eqref{26.5b}. Both \eqref{26.5a} and \eqref{26.5b} are determined up to the free parameter $\beta_{0,0}$, which is determined upon choosing a normalization for the eigenfunction ${\bf w}$. 

Finally, given the solutions $\lambda_2$ and $\beta_{0,\pm 1}$ of \eqref{20a}-\eqref{20c}, we solve the second-order problem \eqref{19} and obtain
\begin{align}
    {\bf w_2}(x) = \sum_{j=-3}^{3} {\bf w_{2,j}}e^{ijx}+ \beta_{2,-1}{\bf w_{0,-1}}(x) + \beta_{2,0}{\bf w_{0,0}}(x) + \beta_{2,1}{\bf w_{0,1}}(x),
\end{align}
\no for ${\bf w_{2,j}} = {\bf w_{2,j}}(\alpha,\beta_{0,0},\beta_{1,\nu},\mu_1,r_1)\in \mathbb{C}^2$, see the companion Mathematica files for details. The constants $\beta_{2,\nu} \in \mathbb{C}$ are undetermined at this order.

\subsection{The $\mathcal{O}\left(\varepsilon^3\right)$ Problem}

At $\mathcal{O}\left(\varepsilon^3\right)$, the spectral problem \eqref{6} is
\begin{align}
     \Big(\mathcal{L}_0 - \lambda_0\mathcal{R}_0\Big){\bf w_3} = -\sum_{j=1}^{3}\mathcal{L}_j{\bf w_{3-j}} + \mathcal{R}_0\Big(\sum_{k=1}^{3}\lambda_k{\bf w_{3-k}} \Big) + \sum_{j=1}^{3}\mathcal{R}_j\Big(\sum_{k=0}^{3-j} \lambda_{k}{\bf w_{3-j-k}} \Big). \label{27}
\end{align}
\no Using the solvability conditions from previous orders, the solvability conditions of \eqref{27} simplify to a $3\times 3$ linear system 
\begin{align}
    \mathcal{M} \begin{pmatrix} \beta_{1,-1} \\ \lambda_3 \\ \beta_{1,1} \end{pmatrix} = \begin{pmatrix} f_{3,1} \\ f_{3,2} \\ f_{3,3} \end{pmatrix}, \label{28}
\end{align}
\no with
\begin{align}
     \mathcal{M}&=\begin{pmatrix}
    2(\lambda_2+ic_gr_1\mu_1) +iT_{2,-1}\mu_1^2+iV_{2,-1}& 2\beta_{0,-1} & iU_{2,-1} \\ iS_{2,0}\mu_1 & 0 & iS_{2,0}\mu_1 \\
    iU_{2,1} & 2\beta_{0,1} & 2(\lambda_2+ic_gr_1\mu_1) +iT_{2,1}\mu_1^2+iV_{2,1}
    \end{pmatrix},
\end{align}
\no and
\begingroup
\allowdisplaybreaks
\begin{subequations}
\begin{align}
    f_{3,1} &= -\Big(\beta_{1,0}S_{2,-1}\mu_1 + \beta_{0,-1}\big(2ir_2\mu_1c_g+\mu_1(A_{3,-1}\lambda_2+iB_{3,-1}r_1\mu_1+iC_{3,-1}+iD_{3,-1}\mu_1^2) \big)\phantom) \nonumber \\
    &\phantom(\quad\quad\quad+\beta_{0,0}\big(iE_{3,-1}\lambda_2+F_{3,-1}r_1\mu_1+G_{3,-1}\mu_1^2\big) \Big),  \\
    f_{3,2} &= -\Big(\beta_{1,0}T_{2,0}\mu_1^2+\beta_{0,-1}\big(A_{3,0}\lambda_2+ir_1\mu_1B_{3,0}+iC_{3,0}\mu_1^2 \big)+\mu_1\beta_{0,0}\big(iD_{3,0}\lambda_2+E_{3,0}r_1\mu_1 \big) \phantom) \nonumber \\
    &\phantom(\quad\quad\quad+\beta_{0,1}\big(A_{3,0}\lambda_2+ir_1\mu_1B_{3,0}-iC_{3,0}\mu_1^2 \big)\Big),  \\
    f_{3,3} &= -\Big(\beta_{1,0}S_{2,1}\mu_1 + \beta_{0,1}\big(2ir_2\mu_1c_g+\mu_1(A_{3,1}\lambda_2+iB_{3,1}r_1\mu_1+iC_{3,1}+iD_{3,1}\mu_1^2) \big)\phantom) \nonumber \\
    &\phantom(\quad\quad\quad+\beta_{0,0}\big(iE_{3,1}\lambda_2+F_{3,1}r_1\mu_1+G_{3,1}\mu_1^2\big) \Big).
\end{align}
\end{subequations}
\endgroup
\no The capitalized coefficients in the expressions above are all real-valued functions of the aspect ratio $\alpha$. Explicitly, 
\begingroup
\allowdisplaybreaks
\begin{subequations}
\begin{align}
A_{3,-1} &= 1 + \frac{\alpha}{c_0^2}-\alpha c_0^2, \\
B_{3,-1} &= \frac{\alpha(\alpha-c_0^2+c_0^6-\alpha c_0^8}{c_0^3}, \\
    C_{3,-1} &= \frac{1}{2c_0^5\left(-4c_0^2+\omega_2^2 \right)^2}\Big(8\alpha c_0^{16} + \alpha\omega_2^4 + c_0^2\omega_2^2\left(\alpha-\omega_2^2\right) -c_0^{14}\left( 8+13\alpha\omega_2^2\right)\phantom) \nonumber \\
    &\quad\quad\quad+c_0^6\left(56+17\alpha\omega_2^2+10\omega_2^4\right) + c_0^{12}\left(28\alpha + 22\omega_2^2-4\alpha\omega_2^4\right)-10c_0^4\big(2\alpha \phantom)  \nonumber \\
    &\phantom(\quad\quad\quad \hspace{-0.125cm}+ \omega_2^2+\alpha\omega_2^4\big)+c_0^8\left(-16\alpha-44\omega_2^2+13\alpha\omega_2^4\right)+c_0^{10}\big(16 \phantom) \nonumber \\
    &\phantom(\phantom(\quad\quad\quad\hspace{-0.26cm}-5\omega_2^2\left(\alpha+\omega_2^2\right)\big) - 2c_0^3\left(-\alpha+c_0^2+\alpha c_0^4\right)\left(4c_0^2-\omega_2^2 \right)^2\left(c_2-Q_{2,0}\right)\Big), \\
    D_{3,-1} &= \frac{1}{12 c_0^5}\Big(3\alpha^2\left(\alpha-c_0^2\right)  + \alpha c_0^4(3+\alpha^2)\left(c_0^4-1\right) + c_0^6\left(3+6\alpha^2\right)-3\alpha^2c_0^{10}\left(-1+\alpha c_0^2\right) \Big), \\
    E_{3,-1} &= \frac{1-c_0^4}{2c_0}, \\
    F_{3,-1} &= \frac12\left(3-c_0^4\right), \\
    G_{3,-1} &= \frac{\alpha^2+2\alpha c_0^2 - c_0^4(3+\alpha^2)-c_0^8(\alpha^2-1)-2\alpha c_0^{10}+\alpha^2c_0^{12}}{8c_0^4}, \\
    E_{3,0} &= -\alpha + c_0^2-\alpha c_0^4,
\end{align}
\end{subequations}
\endgroup
\no where $Q_{2,0}$ is a second-order correction of the Stokes wave due to the traveling frame (see Subsection 2.2) and $\omega_2 = \omega(2)$ for $\omega$ in \eqref{9}. Analogous to \eqref{21}, the remaining coefficients are determined by the following identities for $\alpha > 0$:
\begin{align}
  & \hspace*{-0.5cm} A_{3,-1} = -A_{3,1}, ~ B_{3,-1} =-B_{3,1}, C_{3,-1} = C_{3,1}, ~ D_{3,-1} = D_{3,1}, ~E_{3,-1} =-E_{3,1}, ~ F_{3,-1} =-F_{3,1},\nonumber \\
  &  ~ G_{3,-1} = G_{3,1}, ~ A_{3,0} = -E_{3,-1}/c_0, ~B_{3,0} = F_{3,-1}/c_0, ~ C_{3,0} = G_{3,-1}/c_0, ~D_{3,0} = c_0A_{3,-1}.   \label{28.5}
\end{align}
\no In addition, we have a new identity
\begin{align}
    T_{2,0}\Big(S_{2,0}\Big(c_gE_{3,-1}+F_{3,-1}\Big)+S_{2,-1}\Big(-c_gA_{3,0}+B_{3,0}\Big)\Big)-S_{2,-1}S_{2,0}\Big(c_gD_{3,0}+E_{3,0} \Big) = 0, \label{28.55}
\end{align}
\no to be used momentarily. The proofs of \eqref{28.5} and \eqref{28.55} are found in the companion Mathematica files.

Taking the positive branch of $\lambda_2$ and corresponding branches of $\beta_{0,\pm 1}$, a direct calculation shows
\begin{align}
    \textrm{det}(\mathcal{M}) = \beta_{0,0}T_{2,0}\Delta_{BW}\mu_1^3,
\end{align}
\no which is nonzero for $\mu_1$ satisfying \eqref{24}. A similar result holds if the negative branch of $\lambda_2$ is chosen. Thus, \eqref{28} is an invertible linear system for all eigenvalues along the figure-eight curve. Solving this system for $\lambda_3$ on the half-loop, we find
\begin{align}
    \lambda_3 =\lambda_{3,R}+ i\lambda_{3,I},
\end{align}
\no where
\begin{subequations}
\begin{align}
    \lambda_{3,R} &= \frac14r_1\mu_1\biggr(\frac{e_2\Lambda_{3,R}}{\Delta_{BW}}-\frac{\Delta_{BW}\left(c_gA_{3,-1}-B_{3,-1} \right)}{e_{2}}\biggr), \label{29a} \\
    \lambda_{3,I} &= \mu_1\biggr(-r_2c_g + \frac{\Lambda_{3,I}}{32e_2T_{2,0}^2} \biggr), \label{29b}
\end{align} 
\end{subequations}
\no and, after using \eqref{28.55} to simplify,
\begin{subequations}
\begin{align}
    \Lambda_{3,R} &= \mu_1^2\left(c_gA_{3,-1}-B_{3,-1}\right),  \\
    \Lambda_{3,I} &= -A_{3,-1}\Delta_{BW}^2T_{2,0}^2-16e_2T_{2,0}\left(-C_{3,0}S_{2,-1}-G_{3,-1}S_{2,0}+T_{2,0}\left(C_{3,-1}+D_{3,-1} \right) \right) \nonumber \\ &\quad\quad\quad+e_2^2\left(-2D_{3,0}S_{2,-1}S_{2,0}+T_{2,0}\left(-2A_{3,0}S_{2,-1}+2E_{3,-1}S_{2,0}+A_{3,-1}T_{2,0}\mu_1^2 \right) \right). 
\end{align}
\end{subequations} 
\no Solutions of \eqref{27} for $\beta_{1,\pm 1}$ are found in the companion Mathematica files. 

It appears \eqref{29a} is singular as $\mu_1 \rightarrow M$ since $\Delta_{BW} \rightarrow 0$. If $r_1\neq 0$, this singularity is not removable, as the following proposition shows. 

\begin{theorem}
Let $\Lambda_{3,R}^{(M)} = \lim_{\mu_1\rightarrow M} \Lambda_{3,R}$. For $\alpha>\alpha_{BW}$, $\Lambda_{3,R}^{(M)}\neq 0$. 
\end{theorem}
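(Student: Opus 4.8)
The plan is to show that $\Lambda_{3,R}^{(M)} = M^2\left(c_g A_{3,-1} - B_{3,-1}\right) \neq 0$ by reducing the question to the non-vanishing of the single hyperbolic expression $c_g A_{3,-1} - B_{3,-1}$ on the interval $\alpha > \alpha_{BW}$. Since $M = \sqrt{8 e_{BW}/e_2} > 0$ for $\alpha > \alpha_{BW}$ (by the earlier Proposition establishing $T_{2,-1} > 0$, hence $e_2 > 0$, together with $e_{BW} > 0$ precisely when $\alpha > \alpha_{BW}$), the prefactor $M^2$ is strictly positive and may be discarded. So the entire content is: $c_g A_{3,-1} - B_{3,-1}$ has no zero for $\alpha > \alpha_{BW}$.

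The key steps I would carry out are as follows. First, substitute $c_0^2 = \tanh(\alpha)$ into the explicit formulas $A_{3,-1} = 1 + \alpha/c_0^2 - \alpha c_0^2$, $B_{3,-1} = \alpha(\alpha - c_0^2 + c_0^6 - \alpha c_0^8)/c_0^3$, and $c_g = (\alpha(1 - c_0^4) - c_0^2)/(2c_0)$, and clear denominators by multiplying through by a positive power of $c_0$ (equivalently $\tanh^{1/2}(\alpha)$) and by $\cosh^k(\alpha)$ for suitable $k$. This produces a combination of $\alpha$, $\sinh$, $\cosh$ at integer multiples of $\alpha$ times a manifestly positive factor; the factored form will be verified in the companion Mathematica files, consistent with the style of the other identities in the paper. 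Second, I would rearrange the resulting numerator into a sum of terms each of which is sign-definite on $\alpha > 0$ (or at least on $\alpha > \alpha_{BW}$), using the same elementary inequalities invoked in the proof of the earlier Proposition — namely $\tanh(x) < x$, $e^x > 1 + x$ for $x > 0$, and the positivity of Taylor coefficients of $\cosh$ and $\sinh$. I expect the cleanest route is to write the cleared numerator as a polynomial in $\alpha$ with coefficients that are positive combinations of $\cosh(2\alpha) - 1$, $\sinh(2\alpha) - 2\alpha$, and similar non-negative hyperbolic remainders, so that the whole expression is strictly positive (or strictly negative) for all $\alpha > 0$, which a fortiori covers $\alpha > \alpha_{BW}$.

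The main obstacle is the sign analysis of the cleared numerator: unlike the $S_{2,-1}$, $T_{2,-1}$, $T_{2,0}$ expressions in the earlier Proposition, which factor neatly, $c_g A_{3,-1} - B_{3,-1}$ mixes a quadratic-in-$\alpha$ structure with hyperbolic functions and need not have an obvious product decomposition. If a direct term-by-term positivity argument fails, the fallback is a two-part estimate: show the expression is positive for small $\alpha$ by Taylor expansion about $\alpha = 0$ (checking the leading nonzero coefficient) and for large $\alpha$ by extracting the dominant exponential balance, then verify monotonicity or convexity in between — or, more robustly, bound $c_g A_{3,-1} - B_{3,-1}$ below by an elementary expression (e.g. replacing $\tanh(\alpha)$ by $1$ and $\alpha \tanh(\alpha)$ by a suitable rational lower bound) whose positivity on $\alpha > \alpha_{BW}$ is immediate. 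Since $\alpha_{BW} \approx 1.363$ is bounded away from $0$, the large-$\alpha$ regime dominates and such a crude bound should suffice; the only delicate point is ensuring no cancellation near $\alpha = \alpha_{BW}$ itself, which can be checked numerically and then certified by the elementary lower bound. All the algebraic simplifications and the final factored/bounded form would be deferred to the companion Mathematica files, as is done throughout the paper.
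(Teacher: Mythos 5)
Your reduction is correct: since $M^2 = 8e_{BW}/e_2 > 0$ for $\alpha > \alpha_{BW}$ (using $e_2 = 4T_{2,-1} > 0$ from the earlier Proposition and $e_{BW}>0$ on that range), the claim is equivalent to the non-vanishing of $c_g A_{3,-1} - B_{3,-1}$. But the proposal stops short of actually establishing that non-vanishing: what you offer is a menu of strategies (term-by-term positivity of a cleared numerator, Taylor expansion near $\alpha=0$ plus dominant balance for large $\alpha$ plus a monotonicity check in between, a numerical check near $\alpha_{BW}$ "certified" afterwards, deferral to Mathematica), none of which is carried out, and you do not even commit to the sign of the expression. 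Since this single step is the entire mathematical content of the proposition, the proof as written has a genuine gap.

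The step you are missing is an exact algebraic identity that makes the whole analysis unnecessary: a direct computation with the stated formulas gives
\begin{equation*}
c_g A_{3,-1} - B_{3,-1} \;=\; \frac{\alpha^2(1-c_0^4)^2 - c_0^4 - 2\alpha^2 + 2\alpha c_0^2 - 2\alpha c_0^6 + 2\alpha^2 c_0^8}{2c_0^3} \;=\; -2\,T_{2,-1},
\end{equation*}
so that $\Lambda_{3,R}^{(M)} = \frac{8e_{BW}}{e_2}\left(-2T_{2,-1}\right) = -4e_{BW} < 0$ for $\alpha>\alpha_{BW}$. This both closes the gap and shows why no fresh sign analysis of a new hyperbolic expression is needed: the quantity you were about to estimate is (up to the factor $-2$) exactly the coefficient $T_{2,-1}$ whose positivity was already proved in the earlier Proposition. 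If you prefer your own route, you must at minimum complete the positivity argument for the cleared numerator rather than listing fallbacks; the identity above tells you that the correct sign is negative and that the expression is in fact nonzero for all $\alpha>0$, not merely for $\alpha>\alpha_{BW}$.
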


\begin{proof}
Taking the appropriate limit of $\Lambda_{3,R}$ yields
\begin{align}
    \Lambda_{3,R}^{(M)} = \frac{8e_{BW}}{e_2}\Big(c_gA_{3,-1}-B_{3,-1}\Big). 
\end{align}
\no Using explicit expressions for $A_{3,-1}$, $B_{3,-1}$, and $c_g$, a direct calculation shows
\begin{align}
   c_gA_{3,-1}-B_{3,-1} =-2T_{2,-1}.
\end{align}
\no Given $e_2 = 4T_{2,-1}$ by definition, we conclude
\begin{align}
    \Lambda_{3,R}^{(M)} = -4e_{BW} <0, 
\end{align}
\no for $\alpha > \alpha_{BW}$. This proves the claim.
\end{proof}

Because $\Lambda_{3,R}^{(M)}$ is nonzero for all $\alpha>\alpha_{BW}$, \eqref{29a} is singular as $\mu_1 \rightarrow M$, unless $r_1=0$. Since the Benjamin-Feir figure-eight curve consists of bounded eigenvalues that have non-singular dependence on the Floquet exponent \cite{bertiMasperoVentura2022}, the regular curve condition \cite{creedonDeconinckTrichtchenko2021b,creedonDeconinckTrichtchenko2021a,creedonDeconinckTrichtchenko2022} enforces the choice $r_1 = 0$ to remove the singularity, justifying our claim at the previous order.
 
With $r_1 = 0$, $\lambda_3$ is imaginary and depends on the second-order rescaling parameter $r_2$. To determine $r_2$ and the next real correction to the figure-eight curve, we must proceed to $\mathcal{O}\left(\varepsilon^4\right)$.
This requires the solution of \eqref{27} subject to the solvability conditions above. We obtain
\begin{align}
    {\bf w_3}(x) = \sum_{j=-4}^{4} {\bf w_{3,j}}e^{ijx} + \beta_{3,-1}{\bf w_{0,-1}}(x) + \beta_{3,0}{\bf w_{0,0}}(x) + \beta_{3,1}{\bf w_{0,1}}(x),
\end{align}
\no where ${\bf w_{3,j}} = {\bf w_{3,j}}(\alpha,\beta_{0,0},\beta_{1,0},\beta_{2,\nu},\mu_1,r_2)\in \mathbb{C}^2$ while $\beta_{3,\nu} \in \mathbb{C}$ are undetermined constants at this order, see the companion Mathematica files for details.

\subsection{The $\mathcal{O}\left(\varepsilon^4\right)$ Problem}

At $\mathcal{O}\left(\varepsilon^4\right)$, the spectral problem \eqref{6} is
\begin{align}
     \Big(\mathcal{L}_0 - \lambda_0\mathcal{R}_0\Big){\bf w_4} = -\sum_{j=1}^{4}\mathcal{L}_j{\bf w_{4-j}} + \mathcal{R}_0\Big(\sum_{k=1}^{4}\lambda_k{\bf w_{4-k}} \Big) + \sum_{j=1}^{4}\mathcal{R}_j\Big(\sum_{k=0}^{4-j} \lambda_{k}{\bf w_{4-j-k}} \Big). \label{30}
\end{align}
\no The solvability conditions of \eqref{30} simplify to a $3\times 3$ linear system 
\begin{align}
    \mathcal{M} \begin{pmatrix} \beta_{2,-1} \\ \lambda_4 \\ \beta_{2,1} \end{pmatrix} = \begin{pmatrix} f_{4,1} \\ f_{4,2} \\ f_{4,3} \end{pmatrix}, \label{31}
\end{align}
\no where $\mathcal{M}$ is as before and
\begingroup
\allowdisplaybreaks
\begin{subequations}
\begin{align}
    f_{4,1} &= -\Big(\beta_{2,0}S_{2,-1}\mu_1 + \beta_{1,-1}\big(2\left(\lambda_3+ir_2\mu_1c_g\right)+\mu_1(A_{3,-1}\lambda_2+iC_{3,-1}+iD_{3,-1}\mu_1^2) \big)\phantom) \nonumber \\
    &\phantom(\quad\quad\quad+\beta_{1,0}\big(iE_{3,-1}\lambda_2+G_{3,-1}\mu_1^2\big) +\beta_{0,-1}\big(2ir_3\mu_1c_g +\mu_1(A_{3,-1}\lambda_3+iB_{3,-1}r_2\mu_1 \phantom) \phantom) \phantom) \nonumber \\
    &\phantom(\phantom(\phantom(\quad\quad\quad\hspace*{-0.28cm}+ i\mu_1^3I_{4,-1} + \mu_1(H_{4,-1}\lambda_2 + iG_{4,-1}^{0,-1}) )\big) + \lambda_2E_{4,-1} + iJ_{4,-1}^{0,-1} - i\lambda_2^2/c_0\big) \nonumber \\
      &\phantom(\quad\quad\quad+\beta_{0,0}\big(iE_{3,-1}\lambda_3+F_{3,-1}r_2\mu_1+D_{4,-1}\mu_1^3+\mu_1\left(iA_{4,-1}\lambda_2+C_{4,-1}\right) \big)\phantom)  \nonumber \\ &\phantom(\quad\quad\quad+ i\beta_{0,1} \big(\mu_1^2G_{4,-1}^{0,1} +J_{4,-1}^{0,1}\big) \Big),  \\
         f_{4,2} &= -\Big(\beta_{2,0}T_{2,0}\mu_1^2+\beta_{1,-1}\big(A_{3,0}\lambda_2+iC_{3,0}\mu_1^2 \big)+i\beta_{1,0}D_{3,0}\mu_1\lambda_2 \phantom) \nonumber \\
   &\phantom(\quad\quad\quad+\beta_{1,1}\big(A_{3,0}\lambda_2-iC_{3,0}\mu_1^2 \big) + \beta_{0,-1}\big(A_{3,0}\lambda_3 + iB_{3,0}r_2\mu_1 + \mu_1(D_{4,0}\lambda_2+iF_{4,0} \phantom)\phantom)\phantom) \nonumber \\
    &\phantom(\phantom(\phantom(\quad\quad\quad\hspace*{-0.28cm}+iG_{4,0}\mu_1^2) \big) +\beta_{0,0}\big(\mu_1(iD_{3,0}\lambda_3+E_{3,0}r_2\mu_1+H_{4,0}\mu_1^3 + C_{4,0}\mu_1)-\lambda_2^2 \big) \phantom) \nonumber \\ &\phantom(\quad\quad\quad+\beta_{0,1}\big(A_{3,0}\lambda_3+iB_{3,0}r_2\mu_1+\mu_1(-D_{4,0}\lambda_2+iF_{4,0}+iG_{4,0}\mu_1^2) \big) \Big),
 \\
 f_{4,3} &= -\Big(\beta_{2,0}S_{2,1}\mu_1 + \beta_{1,1}\big(2\left(\lambda_3+ir_2\mu_1c_g\right)+\mu_1(A_{3,1}\lambda_2+iC_{3,1}+iD_{3,1}\mu_1^2) \big)\phantom) \nonumber \\
    &\phantom(\quad\quad\quad+\beta_{1,0}\big(iE_{3,1}\lambda_2+G_{3,1}\mu_1^2\big) +\beta_{0,1}\big(2ir_3\mu_1c_g +\mu_1(A_{3,1}\lambda_3+iB_{3,1}r_2\mu_1 \phantom) \phantom) \phantom) \nonumber \\
    &\phantom(\phantom(\phantom(\quad\quad\quad\hspace*{-0.28cm}+ i\mu_1^3I_{4,1} + \mu_1(H_{4,1}\lambda_2 + iG_{4,1}^{0,1}))\big) + \lambda_2E_{4,1} + iJ_{4,1}^{0,1} + i\lambda_2^2/c_0\big)  \nonumber \\
    &\phantom(\quad\quad\quad+\beta_{0,0}\big(iE_{3,1}\lambda_3+F_{3,1}r_2\mu_1+D_{4,1}\mu_1^3+\mu_1\left(iA_{4,1}\lambda_2+C_{4,1}\right) \big)\phantom)  \nonumber \\ 
    &\phantom(\quad\quad\quad+ i\beta_{0,-1} \big(\mu_1^2G_{4,1}^{0,-1} +J_{4,1}^{0,-1}\big) \Big). 
\end{align}
\end{subequations}
\endgroup
As before, the capitalized coefficients above are all real-valued functions of $\alpha$. The interested reader can consult the companion Mathematica files for the explicit representations of these functions. One can show that 
\begin{align}
  &\hspace*{-0.5cm} A_{4,-1} = A_{4,1}, ~ C_{4,-1} =-C_{4,1}, ~ D_{4,-1} = -D_{4,1}, E_{4,-1} = E_{4,1}, ~G_{4,-1}^{0,1} =-G_{4,1}^{0,-1}, \nonumber \\
 & \hspace*{-0.5cm} G_{4,-1}^{0,-1} =-G_{4,1}^{0,1}, ~ H_{4,-1} = H_{4,1}, ~ I_{4,-1} = -I_{4,1}, ~ J_{4,-1}^{0,1} = -J_{4,1}^{0,-1}, ~ J_{4,-1}^{0,-1} = -J_{4,1}^{0,1}, \nonumber\\
 & \hspace*{1.5cm} J_{4,-1}^{0,-1} = J_{4,-1}^{0,1}, ~ D_{4,0} = -A_{4,-1}/c_0, ~ F_{4,0} = C_{4,-1}/c_0,
\end{align}
\no for $\alpha>0$, analogous to \eqref{21} and \eqref{28.5} from the previous orders. 
Solving \eqref{31} for $\lambda_4$ on the half loop yields
\begin{align}
    \lambda_4 = \lambda_{4,R} +i\lambda_{4,I},  \label{32}
\end{align}
\no with
\begin{subequations}
\begin{align}
    \lambda_{4,R} &= \frac{\mu_1}{256T_{2,0}^3}\biggr(\frac{\Lambda_{4,R}^{(1)}}{T_{2,0}\Delta_{BW}}-\frac{\Delta_{BW}\Lambda_{4,R}^{(2)}}{c_0e_2^2}\biggr), \label{32.5a}\\
    \lambda_{4,I} &= -r_3\mu_1c_g.\label{32.5b}
\end{align}
\end{subequations}
\no The coefficients $\Lambda_{4,R}^{(j)}$ in \eqref{32.5a} decompose as
\begin{align}
    \Lambda_{4,R}^{(j)} = \Lambda_{4,R}^{(j,1)}r_2 + \Lambda_{4,R}^{(j,2)}, \quad  j \in \{1,2\}.
\end{align}
\no An application of \eqref{28.55} shows
\begin{align}
    \Lambda_{4,R}^{(1,1)} &= 64e_{2}T_{2,0}^4\mu_1^2\left(c_gA_{3,-1}-B_{3,-1} \right), \\
    \Lambda_{4,R}^{(2,1)} &= 64c_0e_2T_{2,0}^2\left(c_gA_{3,-1}-B_{3,-1}\right).
\end{align}
\no The remaining coefficients $\Lambda_{4,R}^{(j,2)}$ are explicit functions of $\alpha$ and $\mu_1^2$. These coefficients as well as the solutions $\beta_{2,\pm 1}$ of \eqref{31} are found in the companion Mathematica files. 

Similar to the previous order, the real part of $\lambda_{4}$ is singular as $\mu_1 \rightarrow M$. To remove this singular behavior, we require
\begin{align}
    r_{2} = -\frac{\Lambda_{4,R}^{(1,2,M)}}{\Lambda_{4,R}^{(1,1,M)}},
\end{align}
\no using the regular curve condition, where
\begin{align}
    \Lambda_{4,R}^{(1,j,M)} = \lim_{\mu_1\rightarrow M}\Lambda_{4,R}^{(1,j)}, \quad  j \in \{1,2\}.
\end{align}
\no The rescaling parameter $r_2$ is well-defined for any fixed $\alpha>\alpha_{BW}$, since $\Lambda_{4,R}^{(1,1,M)} \neq 0$ over this interval by arguments similar to those in Proposition 3.3.1. However, $r_2$ is unbounded as $\alpha \rightarrow \alpha_{BW}^+$ or $\alpha \rightarrow \infty$, see Figure \ref{fig8}. Both limits suggest potentially unbounded growth in the imaginary part of the figure-eight curve at $\mathcal{O}\left(\varepsilon^3\right)$ \eqref{29b} and the real part of the curve at $\mathcal{O}\left(\varepsilon^4\right)$ \eqref{32.5a}. Because $\mu_1$ appears as a factor in both of these expressions, the apparent singular behavior as $\alpha \rightarrow \alpha_{BW}^+$ is arrested since $\mu_1 \rightarrow 0$ in this limit. 
\begin{figure}[tb]
    \centering \hspace*{-0.0cm}
    \includegraphics[height=5.2cm,width=9.8cm]{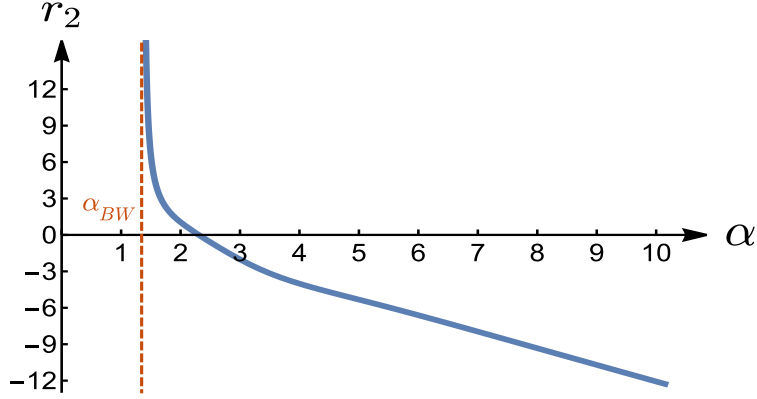} \vspace*{3mm}
    \caption{\small A plot of $r_2$ \emph{vs.} $\alpha$. For all $\alpha > \alpha_{BW}$, $r_2$ is well-defined. As $\alpha \rightarrow \alpha_{BW}^+$ or $\alpha \rightarrow \infty$, $r_2$ becomes singular. The singular behavior as $\alpha \rightarrow \alpha_{BW}^+$ is arrested by the factor of $\mu_1$ in front of \eqref{29b} and \eqref{32.5a}. The singular behavior as $\alpha \rightarrow \infty$ remains, showcasing the breakdown of compactness in finite versus infinite depth, see \cite{bertiMasperoVentura2022,nguyenStrauss2020} for further discussion.}    \label{fig8}
\end{figure}

The same cannot be said as $\alpha \rightarrow \infty$. The culprits for this growth are the expressions for $\beta_{0,\pm1}$ obtained at $\mathcal{O}\left(\varepsilon^2\right)$, see \eqref{26.5a} and \eqref{26.5b}. In particular, $\beta_{0,\pm1}$ both share a factor of $T_{2,0}$ in their respective numerators that becomes unbounded as $\alpha \rightarrow \infty$. This singular behavior is inherited by $r_2$ and ultimately affects the real and imaginary parts of the figure-eight curve at $\mathcal{O}\left(\varepsilon^4\right)$ and $\mathcal{O}\left(\varepsilon^3\right)$, respectively. This provides a first glimpse into the breakdown of compactness that distinguishes the Benjamin-Feir instability spectrum in finite and infinite depth, as discussed in more detail by \cite{bertiMasperoVentura2022,nguyenStrauss2020}. This difference will become even more clear when we consider the infinite depth case in Section 4.

\begin{remark}
The singular behavior of $r_2$ as $\alpha \rightarrow \alpha_{BW}^+$ and $\alpha \rightarrow \infty$ also affects the parameterizing interval of Floquet exponents \eqref{13} at $\mathcal{O}\left(\varepsilon^3\right)$. For similar reasons as above, this singular behavior is avoided as $\alpha\rightarrow \alpha_{BW}^+$, but remains as $\alpha \rightarrow \infty$.
\end{remark}

In general, \eqref{32} has nonzero real part, and we have found a higher-order approximation to the figure-eight curve. This curve is parameterized by Floquet exponents 
\begin{equation}
    \mu \in \varepsilon\left(-M,M\right)\Big(1-\frac{\Lambda_{4,R}^{(1,2,M)}}{\Lambda_{4,R}^{(1,1,M)}}\varepsilon^2 + r_3\varepsilon^3\Big) + \mathcal{O}\left(\varepsilon^5\right). \label{32.25}
\end{equation}
\no The real part along a half-loop of this curve has asymptotic expansion
\begingroup \allowdisplaybreaks
\begin{align}
    \lambda_{R} &= \frac{\mu_1}{8}\Delta_{BW}\varepsilon^2 + \frac{\mu_1}{256c_0\Delta_{BW}\Lambda_{4,R}^{(1,1,M)}e_2^2T_{2,0}^4}\biggr(c_0e_2^2\left(\Lambda_{4,R}^{(1,1,M)}\Lambda_{4,R}^{(1,2)}-\Lambda_{4,R}^{(1,2,M)}\Lambda_{4,R}^{(1,1)} \right)\phantom)  \nonumber \\
    &\phantom(\quad\quad\quad - T_{2,0}\Delta_{BW}^2\left(\Lambda_{4,R}^{(1,1,M)}\Lambda_{4,R}^{(2,2)}-\Lambda_{4,R}^{(1,2,M)}\Lambda_{4,R}^{(2,1)} \right)\biggr)\varepsilon^4 + \mathcal{O}\left(\varepsilon^5\right), \label{33} 
\end{align}
\endgroup
\no for $0<\mu_1<M$, and its corresponding imaginary part is
\begin{equation}
    \lambda_{I} = -\mu_1c_g\varepsilon +\mu_1\biggr( \frac{c_g\Lambda_{4,R}^{(1,2,M)}}{\Lambda_{4,R}^{(1,1,M)}} + \frac{\Lambda_{3,I}}{32e_{2}T_{2,0}^2} \biggr)\varepsilon^3 -r_3\mu_1c_g\varepsilon^4 + \mathcal{O}\left(\varepsilon^5\right). \label{33.5}
\end{equation}
\no Quadrafold symmetry of the stability spectrum \eqref{6} extends \eqref{33} and \eqref{33.5} to a full parameterization of the higher-order approximation of the figure-eight curve.

At this order, $r_3$ is undetermined, leading to ambiguities in the Floquet parameterizing interval \eqref{32.25} and the imaginary part \eqref{33.5}. Proceeding to $\mathcal{O}\left(\varepsilon^5\right)$, one can show via the regular curve condition that $r_3 = 0$. Dropping terms of at least $\mathcal{O}\left(\varepsilon^5\right)$ in \eqref{33} and \eqref{33.5} and eliminating $\mu_1$ leads, in theory, to a new algebraic curve that uniformly approximates the Benjamin-Feir figure-eight to $\mathcal{O}\left(\varepsilon^4\right)$. In practice, eliminating $\mu_1$ from \eqref{33} and \eqref{33.5} is too cumbersome, and we leave this curve in its parameterized form on the half-loop. Figure \ref{fig9} compares our higher-order approximation of the figure-eight with numerical results and the lower-order approximation of the figure-eight, obtained above. Both figure-eight approximations match numerical computations well for $\varepsilon \ll 0.1$. Around $\varepsilon = 0.1$, the lower-order approximation deviates from numerical results, while the higher-order approximation maintains excellent agreement, giving confidence in our higher-order asymptotic expansions.

\begin{figure}[tb]
    \centering \hspace*{-0.0cm}
    \includegraphics[height=7.2cm,width=15cm]{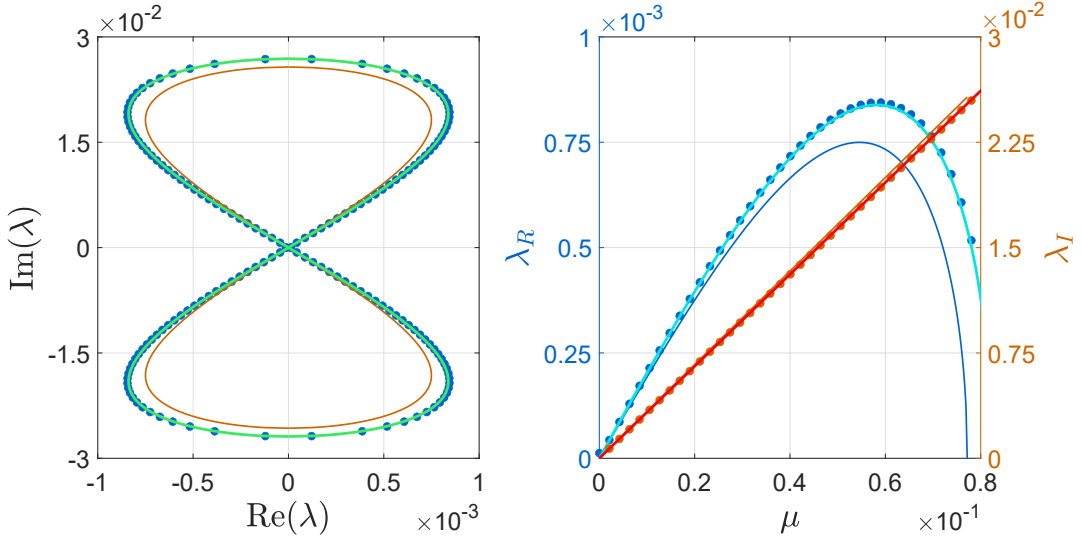} \vspace*{2mm}
    \caption{\small (Left) A plot of the Benjamin-Feir figure-eight curve for a Stokes wave with amplitude $\varepsilon = 0.1$ and aspect ratio $\alpha = 1.5$. Numerical results are given by the blue dots, while asymptotic results to $\mathcal{O}\left(\varepsilon^2\right)$ and $\mathcal{O}\left(\varepsilon^4\right)$ are given by the solid orange and green curves, respectively. (Right) The Floquet parameterization of the real (blue axis) and imaginary (orange axis) part of the figure-eight curve on the left. Numerical results are given by the correspondingly colored dots. The asymptotic parameterizations of the real part to $\mathcal{O}\left(\varepsilon^2\right)$ and $\mathcal{O}
    \left(\varepsilon^4\right)$ are given by the solid blue and light blue curves, respectively, while those for the imaginary part are given by the solid orange and red curves, respectively. }    \label{fig9}
\end{figure}

In addition to a higher-order description of the figure-eight curve, we can estimate its most unstable eigenvalue by examining the critical points of \eqref{33} with respect to $\mu_1$. For ease of notation, let $\lambda_{2,R}$ and $\lambda_{4,R}$ denote the second- and fourth-order corrections of \eqref{33}, respectively, and let $\mu_{1,*}$ denote the critical points. Then
\begin{align}
    \frac{\partial}{\partial \mu_1}\biggr(\lambda_{2,R}(\alpha,\mu_1)\varepsilon^2 + \lambda_{4,R}(\alpha,\mu_1)\varepsilon^4  + \mathcal{O}\left(\varepsilon^5\right) \biggr)\biggr|_{\mu_{1,*}}= 0.
\end{align}
Dropping terms of $\mathcal{O}\left(\varepsilon^5\right)$ and smaller, we arrive at an algebraic equation for the critical points:
\begin{align}
\lambda_{2,R}'(\alpha,\mu_{1,*}) + \lambda_{4,R}'(\alpha,\mu_{1,*})\varepsilon^2 &= 0, \label{34}
\end{align}
\no where primes denote differentiation with respect to $\mu_1$. When $\varepsilon = 0$, \eqref{34} has positive solution
\begin{align}
    \mu_{1,*_0} = 2\sqrt{\frac{e_{BW}}{e_{2}}},
\end{align}
\no coinciding with the first-order correction of the most unstable Floquet exponent \eqref{24.75}.  When $0<\varepsilon\ll 1$, we expect $\mu_{1,*}$ to bifurcate smoothly from $\mu_{1,*_0}$. Since the small parameter in \eqref{34} appears as $\varepsilon^2$, we expand $\mu_{1,*}$ in $\varepsilon^2$, yielding 
\begin{align}
    \mu_{1,*} =  \mu_{1,*_0} + \varepsilon^2\mu_{1,*_2} + \mathcal{O}\left(\varepsilon^4\right). \label{35}
\end{align} 
\no Substituting \eqref{35} into \eqref{34}, 
\begin{align}
    \mu_{1,*_2} = -\frac{\lambda_{4,R}'(\alpha,\mu_{1,*_0})}{\lambda_{2,R}''(\alpha,\mu_{1,*_0})}. \label{35.25}
\end{align}
\no at $\mathcal{O}\left(\varepsilon^2\right)$. To simplify notation further, we drop the functional dependencies above, denoting  $\lambda_{4,R}'(\alpha,\mu_{1,*_0})$ and $\lambda_{2,R}''(\alpha,\mu_{1,*_0})$ instead by $\lambda_{4,R,*}'$ and $\lambda_{2,R,*}''$, respectively. Substituting \eqref{35.25} into \eqref{13}, we arrive at an asymptotic expansion for the Floquet exponent of the most unstable eigenvalue on the higher-order half-loop:
\begin{align}
    \mu_*
    &= \left(2\sqrt{\frac{e_{BW}}{e_{2}}}\right)\varepsilon - \left(2\frac{\Lambda_{4,R}^{(1,2,M)}}{\Lambda_{4,R}^{(1,1,M)}}\sqrt{\frac{e_{BW}}{e_{2}}}+\frac{\lambda_{4,R,*}'}{\lambda_{2,R,*}''}\right)\varepsilon^3+ \mathcal{O}\left(\varepsilon^4\right). \label{35.3}
\end{align}
\no If instead we substitute \eqref{35} into \eqref{33}, we obtain an asymptotic expansion for the real part of the most unstable eigenvalue. Using our simplified notation above,
\begin{equation}
    \lambda_{R,*}=\lambda_{2,R,*}\varepsilon^2 + \biggr( \lambda_{2,R,*}'\mu_{1,*_2}+\lambda_{4,R,*}\biggr)\varepsilon^4 +\mathcal{O}\left(\varepsilon^5\right).
\end{equation}
\no Unpacking this notation, we obtain the more explicit expansion
\begin{align}
    \lambda_{R,*} &= \frac{e_{BW}}{2}\varepsilon^2 + \frac{e_{BW}}{256c_0\Lambda_{4,R}^{(1,1,M)}e_2^2T_{2,0}^4}\biggr(c_0e_2\left(\Lambda_{4,R}^{(1,1,M)}\Lambda_{4,R,*}^{(1,2)}-\Lambda_{4,R}^{(1,2,M)}\Lambda_{4,R,*}^{(1,1)} \right)\phantom)\nonumber  \\
    &\phantom(\quad\quad\quad - 4\frac{T_{2,0}}{e_{BW}}\left(\Lambda_{4,R}^{(1,1,M)}\Lambda_{4,R,*}^{(2,2)}-\Lambda_{4,R}^{(1,2,M)}\Lambda_{4,R,*}^{(2,1)} \right)\biggr)\varepsilon^4 + \mathcal{O}\left(\varepsilon^5\right),  \label{35.4}
\end{align}
\no where $\Lambda_{4,R,*}^{(j,\ell)}$ denotes $\Lambda_{4,R}^{(j,\ell)}$ evaluated at $\mu_1 = \mu_{1,*_0}$.
A similar calculation determines the asymptotic expansion for the imaginary part of this eigenvalue. After some work, 
\begin{equation}
    \lambda_{I,*} = -2c_g\varepsilon\sqrt{\frac{e_{BW}}{e_{2}}} + \left(-c_g\left(\frac{\lambda_{4,R,*}'}{\lambda_{2,R,*}''} \right)+2\sqrt{\frac{e_{BW}}{e_{2}}}\left(\frac{c_g\Lambda^{(1,2,M)}_{4,R}}{\Lambda^{(1,1,M)}_{4,R}} + \frac{\Lambda_{3,I,*}}{32e_{2}T_{2,0}^2}\right) \right)\varepsilon^3+\mathcal{O}\left(\varepsilon^4\right),\label{35.45}
\end{equation}
\no where $\Lambda_{3,I,*}$ denotes $\Lambda_{3,I}$ evaluated at $\mu_1 = \mu_{1,*_0}$. Expansions \eqref{35.3},\eqref{35.4}, and \eqref{35.45} for the most unstable eigenvalue on the figure-eight match numerical computations to excellent agreement, even for sizeable values of $\varepsilon$ on the order of $0.2$. These expansions also improve upon results obtained at $\mathcal{O}\left(\varepsilon^2\right)$, see Figure \ref{fig10}.

\begin{figure}[tb]
    \centering \hspace*{-0.5cm}
    \includegraphics[height=7.2cm,width=15.5cm]{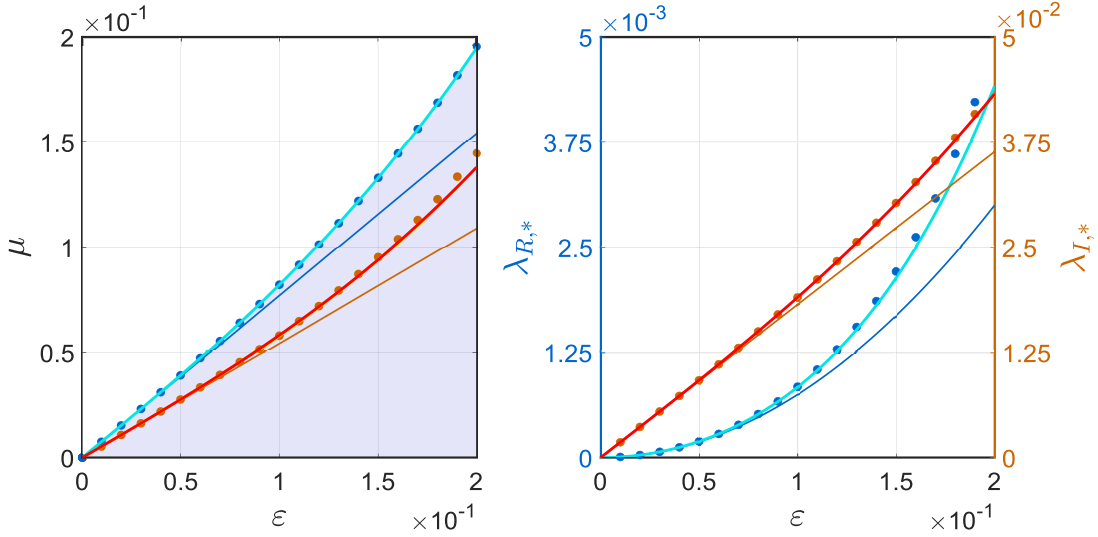} \vspace*{2mm}
    \caption{\small (Left) The interval of Floquet exponents parameterizing the half-loop of the Benjamin-Feir figure-eight curve for a Stokes wave with aspect ratio $\alpha = 1.5$ and variable amplitude $\varepsilon$. The numerically computed boundary of this interval is given by the blue dots, while the solid blue and light blue curves give the asymptotic results to $\mathcal{O}\left(\varepsilon\right)$ and $\mathcal{O}\left(\varepsilon^3\right)$, respectively. The orange dots give the numerically computed Floquet exponents of the most unstable eigenvalue, while the solid orange and red curves give the corresponding asymptotic estimates to $\mathcal{O}\left(\varepsilon\right)$ and $\mathcal{O}\left(\varepsilon^3\right)$, respectively. (Right) The real (blue axis) and imaginary (orange axis) part of the most unstable eigenvalue on the half-loop with $\alpha = 1.5$ and variable $\varepsilon$. Numerical results are given by the correspondingly colored dots. The asymptotic approximations of the real part to $\mathcal{O}\left(\varepsilon^2\right)$ and $\mathcal{O}\left(\varepsilon^4\right)$ are given by the solid blue and light blue curves, respectively. The asymptotic approximations of the imaginary part to $\mathcal{O}\left(\varepsilon\right)$ and $\mathcal{O}\left(\varepsilon^3\right)$ are given by the solid orange and red curves, respectively.}    \label{fig10}
\end{figure}

\subsection{Comparison of the Benjamin-Feir and High-Frequency Instabilities}

High-frequency instabilities of Stokes waves were first explored numerically by  Deconinck \emph{\&} Oliveras \cite{deconinckOliveras2011}. Unlike the Benjamin-Feir instability, these instabilities give rise to unstable spectra away from the origin in the complex spectral plane (Figure~\ref{fig2}), resulting in high-frequency spatial oscillations that are not commensurate or nearly commensurate with the fundamental period of the Stokes waves. A complete asymptotic analysis of high-frequency instabilities has been explored by the authors with Trichtchenko \cite{creedonDeconinckTrichtchenko2022}. Since then, \cite{hurYang2020} has validated some of these formal asymptotic results rigorously. 

For sufficiently small $\varepsilon$, the largest high-frequency instability is closest to the origin. As shown in \cite{creedonDeconinckTrichtchenko2022}, its most unstable eigenvalue has asymptotic expansion
\begin{align}
    \lambda_{R,*}^{(\textrm{HF})} = \frac{|\mathcal{S}_2|}{2\sqrt{\omega(k_0)\omega(k_0+2)}}\varepsilon^2 + \mathcal{O}\left(\varepsilon^4\right), \label{36}
\end{align}
\no where $\mathcal{S}_2$ is a complicated, but explicit, function of the aspect ratio $\alpha$, $\omega$ is given by \eqref{9}, and $k_0$ is an implicit function of $\alpha$ defined as the unique solution of
\begin{align}
    \Omega_{1}(k_0)=\Omega_{-1}(k_0+2), \label{36.5}
\end{align}
\no for $\Omega_{\sigma}$ in \eqref{9}. The leading-order behavior of this instability is $\mathcal{O}\left(\varepsilon^2\right)$, similar to the Benjamin-Feir case:
\begin{align}
    \lambda_{R,*}^{(\textrm{BFI})} = \frac{e_{BW}}{2}\varepsilon^2 + \mathcal{O}\left(\varepsilon^4\right). \label{37}
\end{align}
By comparing coefficients of the leading-order terms in \eqref{36} and \eqref{37}, we can directly compare the largest growth rates of the high-frequency and Benjamin-Feir instabilities for all $\alpha > 0$, see Figure \ref{fig11}. To our knowledge, this is the first time the growth rates of these two instabilities have been compared using analytical methods. A numerical comparison was is available in \cite{deconinckOliveras2011}.

For shallow water, $\alpha < \alpha_{BW}$, only high-frequency instabilities are present. For deep water, $\alpha > \alpha_{BW}$, we have two distinct behaviors. When $\alpha_{BW}<\alpha<\alpha_{DO}$, the high-frequency instabilities dominate the Benjamin-Feir instabilities. When $\alpha>\alpha_{DO}$, the Benjamin-Feir instability dominates. The critical threshold $\alpha_{DO}$ that distinguishes these behaviors in deep water is well-approximated by the implicit solution of
\begin{align}
    \frac{|\mathcal{S}_2|}{\sqrt{\omega(k_0)\omega(k_0+2)}} = e_{BW}, \label{38}
\end{align}
\no for $k_0$ defined in \eqref{36.5}. If we solve \eqref{38} numerically, we find $\alpha_{DO} = 1.4308061674...$, matching the numerical result presented in \cite{deconinckOliveras2011} to four significant digits. 
\begin{remark}
If the $\mathcal{O}\left(\varepsilon^4\right)$ corrections are included in expansions \eqref{36} and \eqref{37}, then $\alpha_{DO} = \alpha_{DO}(\varepsilon)$, where $\alpha_{DO}(0) = 1.4308061674...$. Using dominant balance, we argue that the next correction of $\alpha_{DO}$ appears at $\mathcal{O}\left(\varepsilon^2\right)$, but obtaining this correction explicitly is a computational challenge.  
\end{remark}

\begin{figure}[tb]
    \centering \hspace*{-0.5cm}
    \includegraphics[height=6.5cm,width=13.4cm]{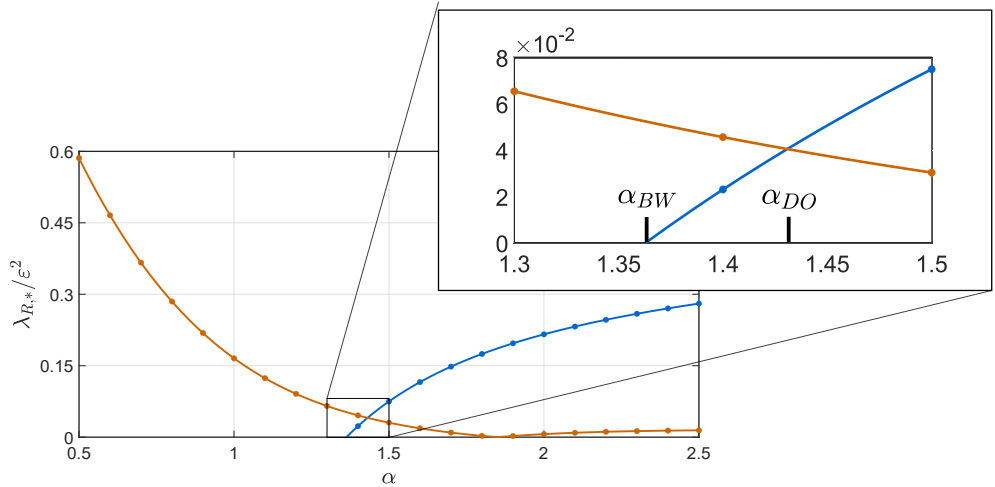} \vspace*{1mm}
    \caption{\small The real part of the most unstable eigenvalue (modulo $\varepsilon^2$) on the largest high-frequency instability (orange) and the Benjamin-Feir instability (blue) as a function of $\alpha$. Numerical results using $\varepsilon = 10^{-3}$ are given by the correspondingly colored dots. The asymptotic results \eqref{36} and \eqref{37} are given by the correspondingly colored solid curves. We observe three regimes for the periodic water wave problem: (i) $\alpha<\alpha_{BW}$, (ii) $\alpha_{BW}<\alpha<\alpha_{DO}$, and (iii) $\alpha>\alpha_{DO}$, where $\alpha_{BW}$ is the root of \eqref{23.1} and $\alpha_{DO}$ is the root of \eqref{38}. This agrees with the numerical results in Figure 11 of \cite{deconinckOliveras2011}.}    \label{fig11}
\end{figure}


\section{The Benjamin-Feir Spectrum in Infinite Depth}

\subsection{A Few Remarks about Infinite Depth}

Our analysis so far applies to the Benjamin-Feir instability spectrum in water of finite depth. In this section, we consider the special case of infinite depth, when $\alpha \rightarrow \infty$. Unfortunately, it is not possible to let $\alpha\rightarrow \infty$ using the expressions obtained in Section 3. Indeed, we have already seen that this limit is singular in the eigenfunction coefficients $\beta_{0,\pm1}$, which affects the description of the figure-eight curve at $\mathcal{O}\left(\varepsilon^3\right)$ and subsequent orders. The peculiarities of this limit are what cause the failure of the proof by Bridges \emph{\&} Mielke \cite{bridgesMielke1995} in infinitely deep water, necessitating the alternative proof by Nguyen \emph{\&} Strauss \cite{nguyenStrauss2020}. This also leads to the qualitative differences in the exact representations of the figure-eight curve in finite and infinite depth, as seen in \cite{bertiMasperoVentura2021,bertiMasperoVentura2022}. 

In this section, we outline the steps of our spectral perturbation method applied to the Benjamin-Feir instability in infinite depth, starting with the unperturbed problem. Replacing the finite-depth operators in the spectral problem \eqref{6} with their infinite-depth equivalents, see Subsection 2.3, we find the following spectral data at $\mathcal{O}\left(\varepsilon^0\right)$:
\begin{subequations}
\begin{align}
    \lambda_0 = 0, \quad  \quad \mu_0 &= 0, \quad \quad \textrm{and}\\
    {\bf w_0}(x) =  \beta_{0,-1}{\bf w_{0,-1}}(x)+&\beta_{0,0}{\bf w_{0,0}}(x)+ \beta_{0,1}{\bf w_{0,1}}(x) ,
\end{align}
\end{subequations}
\no where $\beta_{0,j}\in\mathbb{C}$ are undetermined at this order and
\begin{align}
    {\bf w_{0,-1}}(x) = \begin{pmatrix} 1 \\ i \end{pmatrix}e^{-ix}, \quad {\bf w_{0,0}}(x) = \begin{pmatrix} 0 \\ 1 \end{pmatrix}, \quad \textrm{and} \quad {\bf w_{0,1}}(x) = \begin{pmatrix} 1 \\ -i \end{pmatrix}e^{ix}.
\end{align}
\no A direct calculation shows that ${\bf w_0}$ above is in fact the limit of ${\bf w_0}$ in finite depth \eqref{11.5} as $\alpha \rightarrow \infty$. 

To proceed to higher order, we expand the infinite-depth operators as power series in $\varepsilon$. Some of these operators involve expressions of the form $|n+\mu|$, where $n \in \mathbb{Z}$. When we expand $\mu$ using \eqref{13}, it becomes necessary to expand $|n+\mu|$ as well. To obtain these expansions, we exploit the following identity:
\begin{align}
    |a+b| = |a| + \text{sgn}(a)b,  \label{39}
\end{align}
\no provided $a\in\mathbb{R}\setminus\{0\}$ and $b\in\mathbb{R}$ such that $|b|<|a|$. Substituting \eqref{13} into $|n+\mu|$, equating $\mu_0 = 0$, and applying \eqref{39} for sufficiently small $\varepsilon$ yields 
\begin{align}
    |n+\mu|  = \begin{cases} |n| + \text{sgn}(n)\varepsilon\mu_1(1+r(\varepsilon)), & n \neq 0, \\ |\varepsilon\mu_1|(1+r(\varepsilon)), & n=0. \end{cases} \label{40}
\end{align}
\no Consequently, all infinite-depth operators involving $|n+\mu|$ require two expansions: one for $n\neq0$ and one for $n=0$.

\subsection{The $\mathcal{O}\left(\varepsilon\right)$ Problem}
The $\mathcal{O}\left(\varepsilon\right)$ problem in infinite depth takes the same form as \eqref{14} with the finite-depth operators replaced by their infinite-depth equivalents and the expansions of these operators carried out appropriately. Three solvability conditions are obtained from this problem. One results in a trivial equality, similar to finite depth, and the remaining two are
\begin{subequations}
\begin{align}
    \beta_{0,-1}\left(\lambda_1 -i\frac{\mu_1}{2}\right) &= 0, \\
    \beta_{0,1}\left(\lambda_1 - i\frac{\mu_1}{2}\right) &= 0.
\end{align}
\end{subequations}
\no Imposing $\beta_{0,\pm}\neq1$ as in finite depth, we obtain 
\begin{align}
    \lambda_1 = i\frac{\mu_1}{2},
\end{align}
\no which is consistent with \eqref{18}, since $c_g \rightarrow -1/2$ as $\alpha \rightarrow \infty$. 

Having satisfied the solvability conditions, we solve the $\mathcal{O}\left(\varepsilon\right)$ problem for the first-order eigenfunction correction in infinite depth,
\begin{align}
    {\bf w_1}(x) = {\bf w_{1,p}}(x) +  \beta_{1,-1}{\bf w_{0,-1}}(x)+\beta_{1,0}{\bf w_{0,0}}(x)+ \beta_{1,1}{\bf w_{0,1}}(x) ,
\end{align}
\no where the coefficients $\beta_{1,j}\in \mathbb{C}$ are undetermined at this order and
\begin{equation}
\begin{aligned}
    {\bf w_{1,p}}(x) &=  \beta_{0,-1}\begin{pmatrix} 1 \\ i \end{pmatrix}e^{-2ix} + \frac12i\mu_1\beta_{0,-1}\begin{pmatrix} 0 \\ 1\end{pmatrix}e^{-ix} + \frac12i\mu_1\beta_{0,0}\begin{pmatrix} 1 \\ 0 \end{pmatrix} + \frac12 i\mu_1\beta_{0,1}\begin{pmatrix} 0 \\ 1 \end{pmatrix}e^{ix} \\ &\quad\quad\quad+ \beta_{0,1}\begin{pmatrix} 1 \\ -i \end{pmatrix}e^{2ix}, \label{41}
\end{aligned}
\end{equation}
\no which coincides with the $\alpha \rightarrow \infty$ limit of the corresponding particular solution in finite depth.
\subsection{The $\mathcal{O}\left(\varepsilon^2\right)$ Problem}

Similar to \eqref{19} in finite depth, this problem has three nontrivial solvability conditions 
\begin{subequations}
\begin{align}
    2\beta_{0,-1}\left(\lambda_2-i\frac{1}{2}r_1\mu_1\right) + i\left(-\beta_{0,1}+\left( \frac14\mu_1^2-1\right)\beta_{0,-1} \right) &= 0, \label{42a} \\
    \beta_{0,0}\mu_1^2 &= 0,  \label{42b}\\
    2\beta_{0,1}\left(\lambda_2-i\frac{1}{2}r_1\mu_1\right) + i\left(\beta_{0,-1}+\left( -\frac14\mu_1^2+1\right)\beta_{0,1} \right) &= 0. \label{42c}
\end{align}
\end{subequations}
Equations \eqref{42a} and \eqref{42c} are the limits of their respective equations \eqref{20a} and \eqref{20c} in the finite-depth case, since
\begin{align}
    \quad S_{2,-1} \rightarrow 0,~ T_{2,-1} \rightarrow \frac14,~ U_{2,-1} \rightarrow -1,~ V_{2,-1} \rightarrow -1,
\end{align}
as $\alpha \rightarrow \infty$. The same is true for the second equation \eqref{42b} if one divides the finite-depth equation \eqref{20b} by $T_{2,0}$ first. Then, because
\begin{align}
   S_{2,0} \rightarrow 1, ~   T_{2,0} \rightarrow -\infty,
\end{align}
as $\alpha \rightarrow \infty$, the rescaled \eqref{20b} tends to \eqref{42b}. The unbounded growth of $T_{2,0}$ as $\alpha \rightarrow \infty$ is the reason for the differences between the finite depth and infinite depth calculations, as mentioned in Subsection 3.3. 

Equation \eqref{42b} implies $\mu_1 = 0$ or $\beta_{0,0}=0$. Both numerical \cite{deconinckOliveras2011} and rigorous results \cite{bertiMasperoVentura2021,nguyenStrauss2020} suggest $\mu_1 \neq 0$, and we choose $\beta_{0,0} = 0$. 

\begin{remark}
We avoided normalizing ${\bf w}$ at the start of our analysis because $\beta_{0,0}=0$ in infinite depth. Indeed, had we chosen a non-zero normalization for $\beta_{0,0}$, we would need to renormalize our asymptotic expansions in infinite depth to avoid inconsistencies at higher order. 
\end{remark}

The remaining solvability conditions \eqref{42a} and \eqref{42c} form a nonlinear system of two equations in the three unknowns $\lambda_2$ and $\beta_{0,\pm 1}$. Without loss of generality, we choose $\beta_{0,-1}$ as a free parameter and solve for $\lambda_2$ and $\beta_{0,1}$. As in finite depth, we restrict our analysis to $\mu_1>0$. Solving for $\lambda_2$,
\begin{align}
    \lambda_2 = \lambda_{2,R} + i\lambda_{2,I},
\end{align}
\no with
\begin{subequations}
\begin{align}
   \lambda_{2,R} &= \pm \frac{\mu_1}{8}\sqrt{8-\mu_1^2}, \label{43a} \\
   \lambda_{2,I} &= \frac12r_1\mu_1. \label{43b}
\end{align}
\end{subequations}
\no Equations \eqref{43a} and \eqref{43b} are the limits of \eqref{22a} and \eqref{22b}, respectively, since
    \begin{align}
      e_{2} \rightarrow 1, ~  e_{BW} \rightarrow 1,
\end{align}
\no and $c_g \rightarrow -1/2$ as $\alpha \rightarrow \infty$. For $\lambda_2$ to have a nonzero real part, we must have
\begin{align}
    0<\mu_1<2\sqrt{2}, \label{44}
\end{align}
\no which is consistent with \eqref{24} as $\alpha \rightarrow \infty$. Floquet exponents satisfying \eqref{44} parameterize a single loop of the Benjamin-Feir figure-eight curve in the upper-half complex plane. If we repeat our analysis with $\mu_1<0$, we find $-2\sqrt{2}<\mu_1<0$, which parameterizes the remaining loop of the figure-eight. Combined, the full parameterizing interval of the figure-eight curve in infinite depth is
\begin{align}
    \mu \in \varepsilon\left(-2\sqrt{2},2\sqrt{2} \right)\left(1+r_1\varepsilon\right) + \mathcal{O}\left(\varepsilon^3\right). \label{45}
\end{align}
\no To simplify the remaining analysis, we restrict to a half-loop of the figure-eight curve by choosing the positive branch of \eqref{43a}, as in finite depth.

The imaginary correction \eqref{43b} and Floquet parameterization \eqref{45} depend on the first-order rescaling parameter $r_1$, similar to infinite depth. Using the regular curve condition at the next order, we find $r_1=-\sqrt{2}$. Assembling our expansions for the real and imaginary parts of the half loop in infinite depth, \begin{subequations}
\begin{align}
    \lambda_{R} &= \frac{\mu_1}{8}\varepsilon^2\sqrt{8-\mu_1^2} + \mathcal{O}\left(\varepsilon^3 \right), \label{45.1} \\
    \lambda_{I} &= \frac{1}{2}\mu_1\varepsilon -\frac{1}{\sqrt{2}}\mu_1\varepsilon^2 + \mathcal{O}\left(\varepsilon^3\right). \label{45.2}
\end{align}
\end{subequations}
These expansions agree well with numerical computations for sufficiently small $\varepsilon$, see Figure \ref{fig12}. Dropping $\mathcal{O}\left(\varepsilon^2\right)$ terms in these expansions and eliminating the $\mu_1$ dependence yields the curve
\begin{align}
    4\left(-1+4\sqrt{2}\varepsilon-12\varepsilon^2+8\sqrt{2}\varepsilon^3 - 4\varepsilon^4 \right)\lambda_{R}^2 = 2\varepsilon^2\left(-1+4\sqrt{2}\varepsilon-4\varepsilon^2 \right)\lambda_{I}^2 + \lambda_{I}^4, \label{46}
\end{align}
\no which is a lemniscate of Huygens, similar to finite depth. The coefficients of this lemniscate have additional dependence on $\varepsilon$ since $r_1 \neq 0$, in contrast with the finite-depth case. The low-order approximation of this curve obtained in \cite{bertiMasperoVentura2021} assumes $r_1 = 0$. This approximation works well enough for sufficiently small $\varepsilon$ but is not asymptotic to the true figure-eight curve to $\mathcal{O}\left(\varepsilon^2\right)$, see Figure \ref{fig13}. 

\begin{figure}[tb]
    \centering \hspace*{-0.0cm}
    \includegraphics[height=7.2cm,width=15cm]{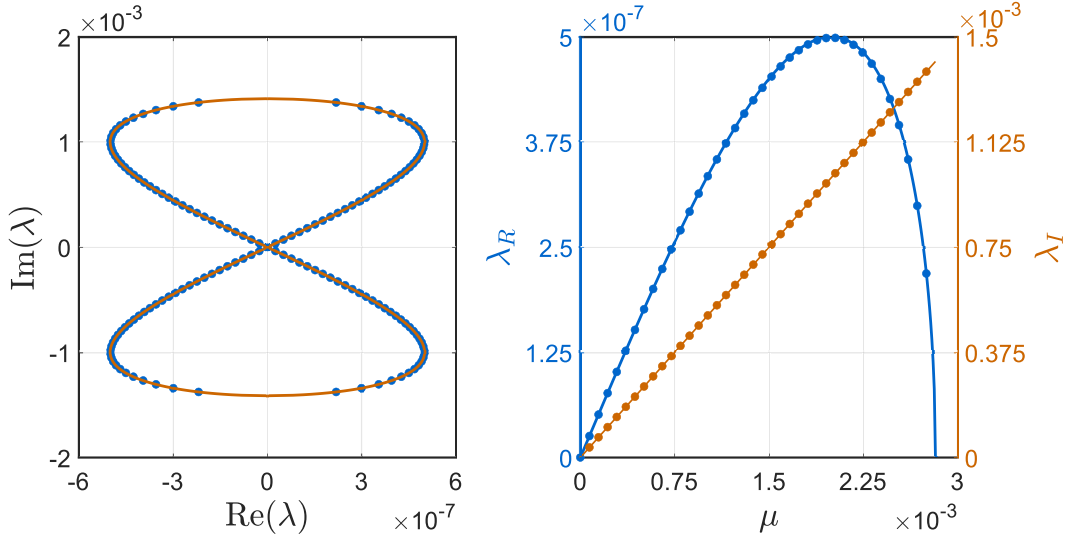} \vspace*{2mm}
    \caption{\small (Left) A plot of the Benjamin-Feir figure-eight curve for a Stokes wave with amplitude $\varepsilon = 10^{-3}$ in infinite depth. Numerical results are given by the blue dots, and the asymptotic results to $\mathcal{O}\left(\varepsilon^2\right)$ are given by the solid orange curve. (Right) The Floquet parameterization of the real (blue axis) and imaginary (orange axis) part of the figure-eight curve on the left. The respective numerical results are given by the correspondingly colored dots, and the asymptotic results for the real and imaginary part to $\mathcal{O}\left(\varepsilon^2\right)$ and $\mathcal{O}\left(\varepsilon\right)$, respectively, are given by the correspondingly colored curves.}    \label{fig12}
\end{figure}

\begin{remark}
The Floquet parameterization of the Benjamin-Feir instability in finite depth is
\begin{align} \mu \in \varepsilon\mu_1\left(1+r_2(\alpha)\varepsilon^2\right)+\mathcal{O}\left(\varepsilon^4\right),\end{align}
while in infinite depth,  
\begin{align} \mu \in \varepsilon\mu_1\left(1-\varepsilon\sqrt{2} \right) + \mathcal{O}\left(\varepsilon^3\right).\end{align}
\no In order for these parameterizations to be consistent, the corrective term $r_2(\alpha)\varepsilon^2$ in finite depth must be promoted an order of magnitude in $\varepsilon$ as $\alpha \rightarrow \infty$. Since $\varepsilon$ can be made arbitrarily small, the only way this is possible is if $|r_2|\rightarrow \infty$ as $\alpha \rightarrow \infty$, which is precisely what we observed in Subsection 3.3.
\end{remark}

\begin{figure}[tb]
    \centering \hspace*{-0.0cm}
    \includegraphics[height=5.9cm,width=12.5cm]{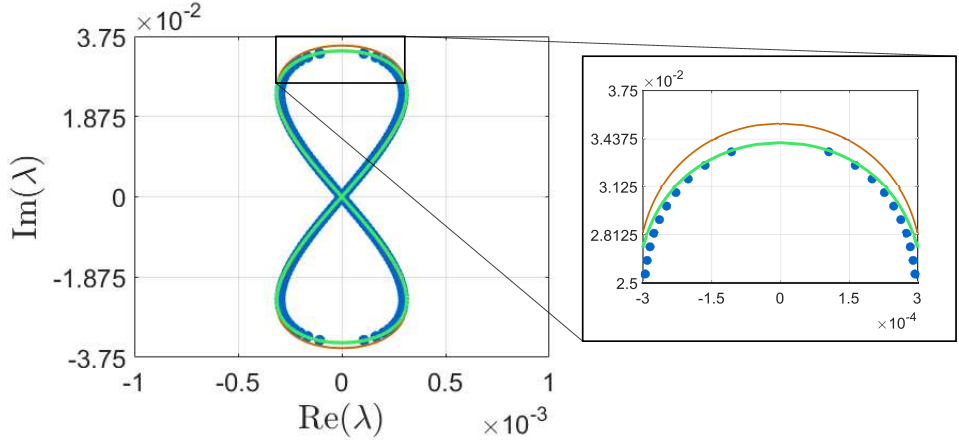} \vspace*{2mm}
    \caption{\small A plot of the Benjamin-Feir figure-eight curve for a Stokes wave with amplitude $\varepsilon = 2.5 \times 10^{-2}$ in infinite depth. Numerical results are given by the blue dots, and the asymptotic results to $\mathcal{O}\left(\varepsilon^2\right)$ are given by the solid green curve. The solid orange curve gives the approximation of Berti \emph{et al.} \cite{bertiMasperoVentura2021}. This approximation is not asymptotic in the imaginary part of the figure-eight curve to $\mathcal{O}\left(\varepsilon^2\right)$.}    \label{fig13}
\end{figure}

A direct calculation shows that \eqref{45.1} attains a maximum value of
\begin{align}
    \lambda_{R,*} = \frac12\varepsilon^2 + \mathcal{O}\left(\varepsilon^3\right), \label{45.4}
\end{align}
\no when $\mu_{1,*} =  2$. Hence, \eqref{45.4} gives an asymptotic expansion for the real part of the most unstable eigenvalue on the half-loop. Its corresponding imaginary part and Floquet exponent are
\begin{subequations}
\begin{align}
    \lambda_{I,*} &= \varepsilon + \mathcal{O}\left(\varepsilon^2\right), \\
    \mu_* &= 2\varepsilon + \mathcal{O}\left(\varepsilon^2\right), \label{45.5}
\end{align}
\end{subequations}
\no respectively. These expansions are consistent with those in finite depth (Subsection 3.2) as well as numerical results (Figure \ref{fig14}). 
 
\begin{figure}[tb]
    \centering \hspace*{-0.5cm}
    \includegraphics[height=7.2cm,width=15.5cm]{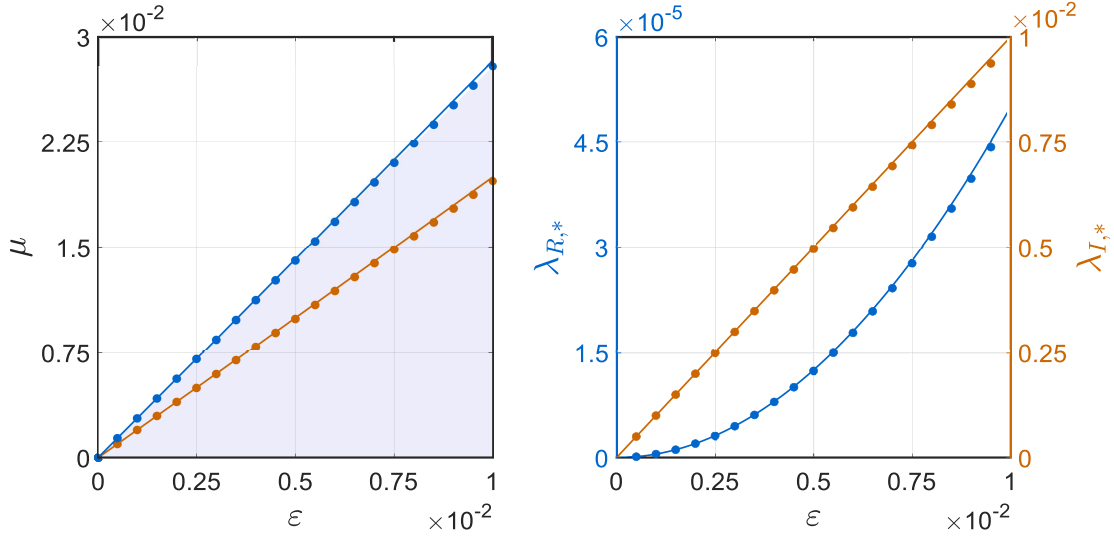} \vspace*{2mm}
    \caption{\small (Left) The interval of Floquet exponents parameterizing the half-loop of the Benjamin-Feir figure-eight curve for a Stokes wave with in infinite depth and variable amplitude $\varepsilon$. The numerically computed boundary of this interval is given by the blue dots, while the solid blue curve gives the asymptotic results to $\mathcal{O}\left(\varepsilon^2\right)$. The orange dots give the numerically computed Floquet exponents of the most unstable eigenvalue, while the solid orange curves give the corresponding asymptotic results to $\mathcal{O}\left(\varepsilon\right)$. (Right) The real (blue axis) and imaginary (orange axis) part of the most unstable eigenvalue in infinite depth with variable $\varepsilon$. Numerical results are given by the correspondingly colored dots, and the asymptotic results for the real and imaginary part to $\mathcal{O}\left(\varepsilon^2\right)$ and $\mathcal{O}\left(\varepsilon\right)$, respectively, are given by the correspondingly colored solid curves.}    \label{fig14}
\end{figure}

Continuing our analysis of the $\mathcal{O}\left(\varepsilon^2\right)$ problem, we solve \eqref{42a} and \eqref{42b} for $\beta_{0,1}$ (assuming $\mu_1>0$). After some work, we find
\begin{align}
    \beta_{0,1} = \frac14\Big(\mu_1^2-4\mp 4i\mu_1\sqrt{8-\mu_1^2} \Big)\beta_{0,-1}. \label{47}
\end{align}
Since we chose the positive branch of $\lambda_{2,R}$ \eqref{43a} without loss of generality, we choose the negative branch of \eqref{47}. In finite depth, $|\beta_{0,\pm 1}| \rightarrow \infty$ as $\alpha \rightarrow \infty$, which is inconsistent with \eqref{47}. However, a direct calculation shows that the ratio $\beta_{0,1}/\beta_{0,-1}$ in finite depth tends to \eqref{47} as $\alpha \rightarrow \infty$, re-establishing consistency between the two results.

Before we proceed to the next order, we solve the $\mathcal{O}\left(\varepsilon^2\right)$ problem subject to the solvability conditions \eqref{42a}-\eqref{42b}. We find 
\begin{align}
    {\bf w_2}(x) = {\bf w_{2,p}}(x) +  \beta_{2,-1}{\bf w_{0,-1}}(x)+\beta_{2,0}{\bf w_{0,0}}(x)+ \beta_{2,1}{\bf w_{0,1}}(x) ,
\end{align}
where $\beta_{2,j} \in \mathbb{C}$ are undetermined at this order and 
\begin{align}
    {\bf w_{2,p}}(x) = \sum_{j=-3}^{3}{\bf w_{2,j}}e^{ijx},
\end{align}
where ${\bf w_{2,j}} = {\bf w_{2,j}}(\beta_{0,-1},\beta_{1,\nu},r_1) \in \mathbb{C}^2$, see the companion Mathematica files for details. 

\subsection{The $\mathcal{O}\left(\varepsilon^3\right)$ Problem}

The solvability conditions of the $\mathcal{O}\left(\varepsilon^3\right)$ problem form a $3\times3$ linear system
\begin{align}
    \mathcal{M}\begin{pmatrix} \beta_{1,0} \\ \lambda_3 \\ \beta_{1,1} \end{pmatrix} = \begin{pmatrix} f_{3,1} \\ f_{3,2} \\ f_{3,3}\end{pmatrix}, \label{48}
\end{align}
\no where $\mathcal{M}$ is given by
\begin{align}
    \mathcal{M} = \begin{pmatrix} \mu_1 & 2\beta_{0,-1} & -i \\ -\mu_1^2 & 0 & 0 \\ -\mu_1 & 2\beta_{0,1} & -\frac{i}{4}\left(-4+8i\lambda_2+4r_1\mu_1+\mu_1^2 \right)\end{pmatrix}, \label{48.1}
\end{align}
\no and
\begingroup
\allowdisplaybreaks
\begin{subequations}
\begin{align}
    f_{3,1} &= \frac{i}{4}\beta_{1,-1}\biggr(4+8i\lambda_2+4r_1\mu_1-\mu_1^2 \biggr), \\
    f_{3,2} &= -i\mu_1^2\biggr(\beta_{0,-1}+\beta_{0,1} \biggr), \\
    f_{3,3} &= -i\biggr(\beta_{1,-1} + \frac{1}{4}\beta_{0,1}\mu_1\Big(6-4r_2+4i\lambda_2+\mu_1^2\Big) \biggr).
\end{align}
\end{subequations}
\endgroup
\no If we substitute expressions for $\lambda_2$ and $\beta_{0,1}$ on the half-loop into \eqref{48.1}, we find 
\begin{align}
    \textrm{det}\left(\mathcal{M}\right) = \mu_1^3\beta_{0,-1}\sqrt{8-\mu_1^2},
\end{align}
\no implying \eqref{48} has a unique set of solutions for $0<\mu_1<2\sqrt{2}$, as desired. The solution for $\lambda_3$ is
\begin{align}
    \lambda_{3} = \lambda_{3,R}+i\lambda_{3,I},
\end{align}
\no with
\begin{subequations}
\begin{align}
    \lambda_{3,R} &= -\frac{\mu_1\left(2\mu_1+r_1\left(-4+\mu_1^2 \right) \right)}{4\sqrt{8-\mu_1^2}}, \label{49a}\\
    \lambda_{3,I} &= -\frac{1}{16}\mu_1\left(16-8r_2+\mu_1^2\right). \label{49b}
\end{align}
\end{subequations}
\no To avoid singular behavior in $\lambda_{3,R}$ as $\mu_1 \rightarrow 2\sqrt{2}$, we choose $r_1$ such that
\begin{align}
    \lim_{\mu_1 \rightarrow 2\sqrt{2}}\Big(2\mu_1+r_1(-4+\mu_1^2) \Big) = 0,
\end{align}
\no according to the regular curve condition. We find $r_1 = -\sqrt{2}$, justifying our prior claim.

Given $r_1=-\sqrt{2}$, we see that \eqref{49a} and \eqref{49b} are nonzero for generic choices of $0<\mu_1<2\sqrt{2}$. Hence, we have obtained a higher-order correction to both the real and imaginary parts of the figure-eight curve in infinite depth. This contrasts with the finite-depth case, where an imaginary correction only was found at $\mathcal{O}\left(\varepsilon^3\right)$. 

To characterize this higher-order correction, it is necessary to determine the value of the second-order rescaling parameter $r_2$ appearing in \eqref{49b}. We show at the next order that $r_2 = 13/8$, using the regular curve condition. Assuming this for now, we assemble our expansions for the real and imaginary parts along a half-loop of this higher-order curve 
\begin{subequations}
\begin{align}
    \lambda_R &= \frac{1}{8}\mu_1\varepsilon^2\sqrt{8-\mu_1^2}\left(1 + \left(\frac{2(-2\mu_1+\sqrt{2}(-4+\mu_1^2))}{8-\mu_1^2} \right)\varepsilon \right) + \mathcal{O}\left(\varepsilon^4\right), \label{50a}\\
    \lambda_I &= \frac12\mu_1\varepsilon\Big(1 -\sqrt{2}\varepsilon - \frac{1}{8}\big(3+\mu_1^2\big)\varepsilon^2\Big) +\mathcal{O}\left(\varepsilon^4 \right), \label{50b}
\end{align}
\end{subequations}
\no respectively. The interval of Floquet exponents for the entire curve has asymptotic expansion
\begin{align}
    \mu \in \varepsilon\left(-2\sqrt{2},2\sqrt{2}\right)\left(1-\sqrt{2}\varepsilon+\frac{13}{8}\varepsilon^2 \right) + \mathcal{O}\left(\varepsilon^4\right). \label{51}
\end{align}
\no These expansions agree well with numerical computations for sufficiently small $\varepsilon$, see Figure \ref{fig15}. In theory, one could eliminate the dependence of $\mu_1$ from \eqref{50a} and \eqref{50b} to obtain an algebraic curve that approximates the true figure-eight to $\mathcal{O}\left(\varepsilon^3\right)$, but this process is cumbersome and provides little insight into the behavior of the true figure-eight curve.

\begin{figure}[tb]
    \centering \hspace*{-0.0cm}
    \includegraphics[height=7.2cm,width=15cm]{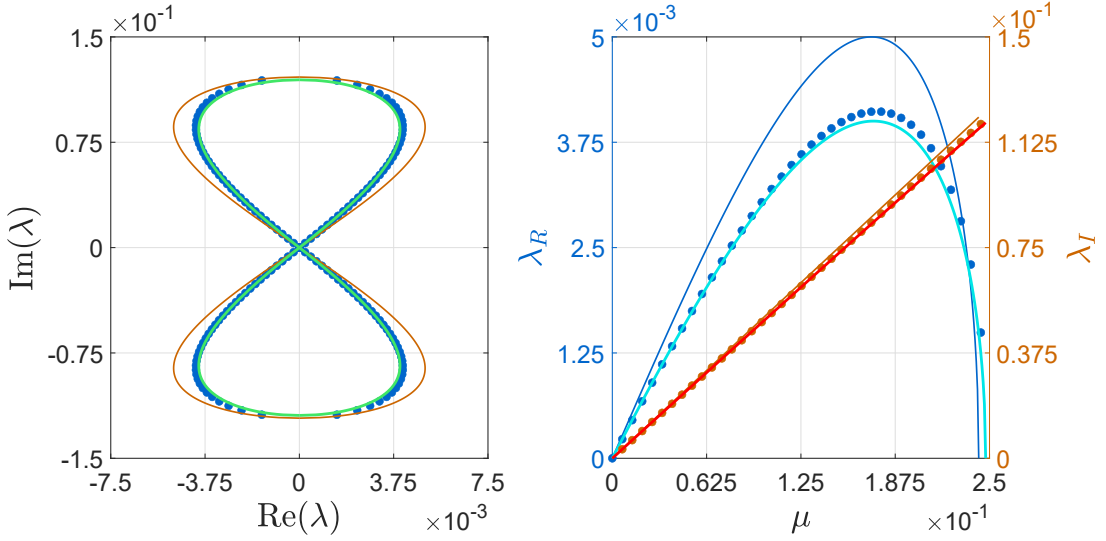} \vspace*{2mm}
    \caption{\small (Left) A plot of the Benjamin-Feir figure-eight curve for a Stokes wave with amplitude $\varepsilon = 0.1$ in infinite depth. Numerical results are given by the blue dots, while asymptotic results to $\mathcal{O}\left(\varepsilon^2\right)$ and $\mathcal{O}\left(\varepsilon^3\right)$ are given by the solid orange and green curves, respectively. (Right) The Floquet parameterization of the real (blue axis) and imaginary (orange axis) part of the figure-eight curve on the left. Numerical results are given by the correspondingly colored dots. The asymptotic parameterizations of the real part to $\mathcal{O}\left(\varepsilon^2\right)$ and $\mathcal{O}
    \left(\varepsilon^3\right)$ are given by the solid blue and light blue curves, respectively, while those for the imaginary part are given by the solid orange and red curves, respectively. }    \label{fig15}
\end{figure}

Proceeding as in Section 3.4, we can derive an asymptotic expansion for the most unstable eigenvalue on this figure-eight curve and for its corresponding Floquet exponent. In particular, if we let $\mu_{1,*}$ denote a critical point of \eqref{50a}, then
\begin{align}
   \frac{\partial}{\partial \mu_1} \Biggr(\frac{1}{8}\mu_1\varepsilon^2\sqrt{8-\mu_1^2}\left(1 + \left(\frac{2(-2\mu_1+\sqrt{2}(-4+\mu_1^2))}{8-\mu_1^2} \right)\varepsilon \right) + \mathcal{O}\left(\varepsilon^4\right)\Biggr)\biggr|_{\mu_1 = \mu_{1,*}} = 0.
\end{align}
\no Dropping terms of $\mathcal{O}\left(\varepsilon^4\right)$ and smaller, we arrive at the following equation for $\mu_{1,*}$: 
\begin{align}
   32-12\mu_{1,*}^2+\mu_{1,*}^4 - \varepsilon\big(32\sqrt{2}+32\mu_{1,*}-24\mu_{1,*}^2\sqrt{2}-2\mu_{1,*}^3+2\mu_{1,*}^4\sqrt{2}  \big) = 0. \label{52}
\end{align}
\no When $\varepsilon = 0$, the only positive solution of \eqref{52} is $\mu_{1,*_0} = 2$, which coincides with \eqref{45.5} from the previous order. When $0<\varepsilon\ll 1$, 
\begin{align}
    \mu_{1,*} = \mu_{1,*_0} + \varepsilon\mu_{1,*_1} + \mathcal{O}\left(\varepsilon^2\right), \label{53}
\end{align}
\no since $\varepsilon$ appears as the small parameter in \eqref{52}. Substituting \eqref{53} into \eqref{52}, we find at $\mathcal{O}\left(\varepsilon\right)$ that $\mu_{1,*_1} = -3+2\sqrt{2}.$ Thus, the Floquet exponent of the most unstable eigenvalue on the figure-eight has asymptotic expansion
\begin{align}
    \mu_* &=  \left(2 + (-3+2\sqrt{2})\varepsilon + \mathcal{O}\left(\varepsilon^2\right)\right)\varepsilon\left(1 -\varepsilon\sqrt{2}+\frac{13}{8}\varepsilon^2 +\mathcal{O}\left(\varepsilon^3\right)\right), 
\end{align}
\no which simplifies to
\begin{align*}
    \mu_* = 2\varepsilon -3\varepsilon^2 + \mathcal{O}\left(\varepsilon^3\right).
\end{align*}
\no Substituting \eqref{53} into \eqref{50a} and \eqref{50b}, we obtain asymptotic expansions 
\begin{subequations}
\begin{align}
    \lambda_{R,*} &= \frac{1}{2}\varepsilon^2 - \varepsilon^3 + \mathcal{O}\left(\varepsilon^4\right), \\
    \lambda_{I,*} &= \varepsilon -\frac32 \varepsilon^2 + \mathcal{O}\left(\varepsilon^3\right),
\end{align}
\end{subequations}
\no for the real and imaginary part of this most unstable eigenvalue on the half-loop, respectively. These expansions agree well with numerical computations (Figure \ref{fig16}), although not to the same degree as the corresponding results in finite depth. This is a result of resolving the higher-order figure-eight curve in infinite depth at $\mathcal{O}\left(\varepsilon^3\right)$ as opposed to $\mathcal{O}\left(\varepsilon^4\right)$. To complete our analysis of the solvability conditions \eqref{48}, we report solutions for $\beta_{1,0}$ and $\beta_{1,1}$ on the half-loop with $r_1 = -\sqrt{2}$. We find
\begin{subequations}
\begin{align}
\beta_{1,0} &= \mu_1\beta_{0,-1}\left(\sqrt{8-\mu_1^2}+\frac{i}{4}\mu_1 \right), \\
    \beta_{1,1} &= \frac18\left( \mu_1\left(-4-4\mu_1\sqrt{2}+3\mu_1^2\right)\beta_{0,-1} + 2\left(-4+\mu_1^2\right)\beta_{1,-1}\right) \nonumber \\
    &\quad\quad\quad+\frac{i\mu_1}{8\sqrt{8-\mu_1^2}}\left(\left(16\sqrt{2}+\mu_1\left(-16-4\mu_1\sqrt{2}+3\mu_1^2\right)\right)\beta_{0,-1}+2\left(-8+\mu_1^2\right)\beta_{1,-1}\right), 
\end{align}
\end{subequations}
\no where $\beta_{0,-1}$ and $\beta_{1,-1}$ depend on the normalization of ${\bf w}$.

\begin{figure}[tb]
    \centering \hspace*{-0.5cm}
    \includegraphics[height=7.2cm,width=15.5cm]{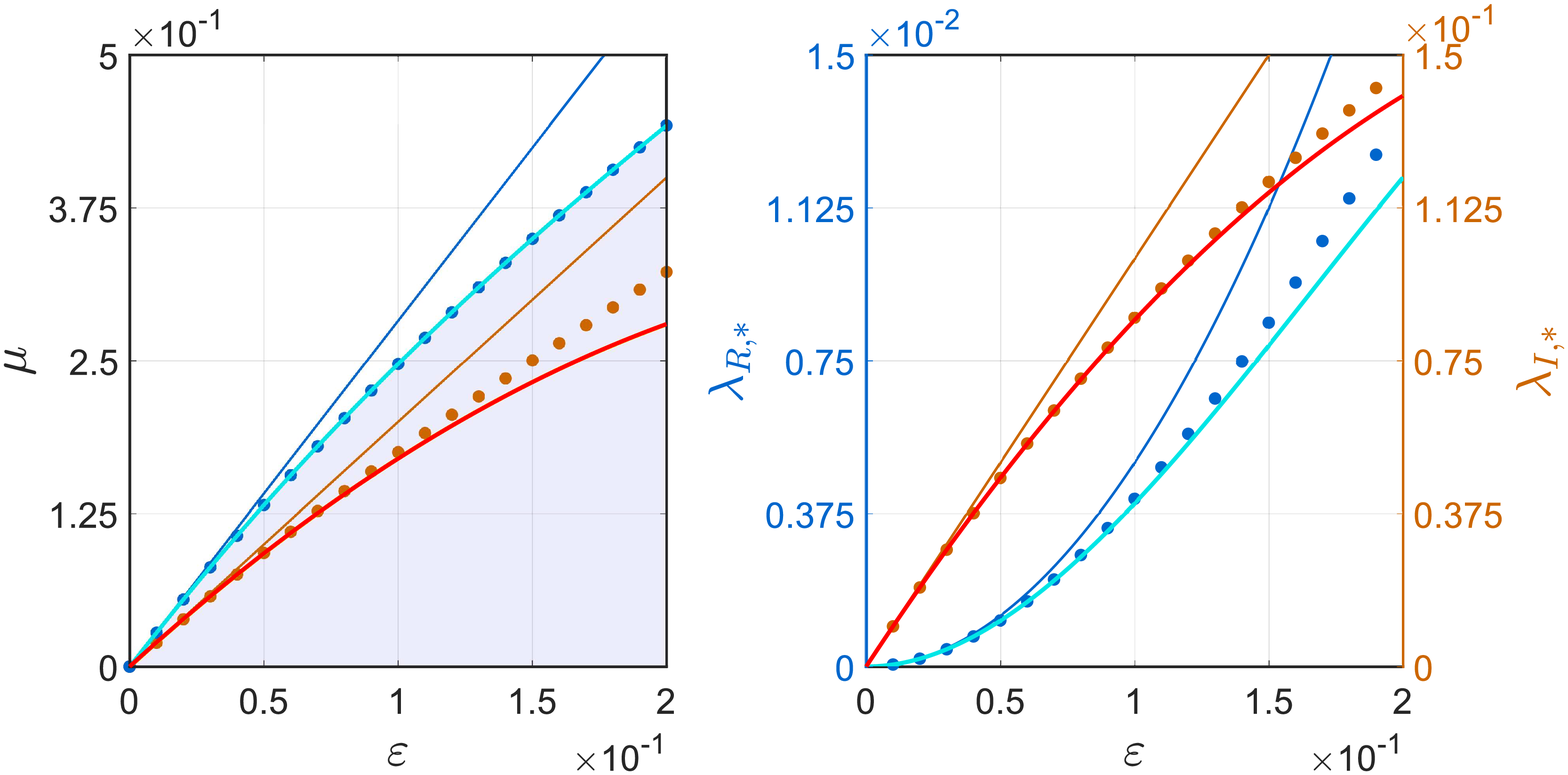} \vspace*{2mm}
    \caption{\small (Left) The interval of Floquet exponents parameterizing the half-loop of the Benjamin-Feir figure-eight curve for a Stokes wave in infinite depth with variable amplitude $\varepsilon$. The numerically computed boundary of this interval is given by the blue dots, while the solid blue and light blue curves give the asymptotic results to $\mathcal{O}\left(\varepsilon\right)$ and $\mathcal{O}\left(\varepsilon^3\right)$, respectively. The orange dots give the numerically computed Floquet exponents of the most unstable eigenvalue, while the solid orange and red curves give the corresponding asymptotic estimates to $\mathcal{O}\left(\varepsilon\right)$ and $\mathcal{O}\left(\varepsilon^2\right)$, respectively. (Right) The real (blue axis) and imaginary (orange axis) part of the most unstable eigenvalue on the half-loop in infinite depth with variable $\varepsilon$. Numerical results are given by the correspondingly colored dots. The asymptotic approximations of the real part to $\mathcal{O}\left(\varepsilon^2\right)$ and $\mathcal{O}\left(\varepsilon^3\right)$ are given by the solid blue and light blue curves, respectively. The asymptotic approximations of the imaginary part to $\mathcal{O}\left(\varepsilon\right)$ and $\mathcal{O}\left(\varepsilon^2\right)$ are given by the solid orange and red curves, respectively.}    \label{fig16}
\end{figure}

Finally, we solve the $\mathcal{O}\left(\varepsilon^3\right)$ problem subject to the solvability conditions \eqref{48}, arriving at an expression for the third-order eigenfunction correction
\begin{align}
    {\bf w_3}(x) = \sum_{j=-4}^{4}{\bf w_{3,j}}e^{ijx} +  \beta_{3,-1}{\bf w_{0,-1}}(x)+\beta_{3,0}{\bf w_{0,0}}(x)+ \beta_{3,1}{\bf w_{0,1}}(x)
\end{align}
where ${\bf w_{3,j}} = {\bf w_{3,j}}(\beta_{0,-1},\beta_{1,-1},\beta_{2,\nu},r_2) \in \mathbb{C}^2$  while $\beta_{3,j} \in \mathbb{C}$ are undetermined constants at this order, see the companion Mathematica files for more details. 
\subsection{The $\mathcal{O}\left(\varepsilon^4\right)$ Problem}
The solvability conditions at $\mathcal{O}\left(\varepsilon^4\right)$ also form a $3\times3$ linear system
\begin{align}
    \mathcal{M}\begin{pmatrix} \beta_{2,0} \\ \lambda_4 \\ \beta_{2,1} \end{pmatrix} = \begin{pmatrix} f_{4,1} \\ f_{4,2} \\ f_{4,3} \end{pmatrix}, \label{54}
\end{align}
\no where $\mathcal{M}$ is the same as before and
\begingroup
\allowdisplaybreaks
\begin{subequations}
 \begin{align}
 f_{4,1} &= -\biggr(\beta_{2,-1}\biggr(2\lambda_2+\frac{i}{4}\left(-4+\mu_14\sqrt{2}+\mu_1^2\right) \biggr) +\beta_{1,-1}\biggr(2\lambda_3 + \frac{\mu_1}{4}\big(-4ir_2 + 4\lambda_2  \phantom) \phantom)\phantom) \nonumber \\ &\phantom(\phantom(\quad\quad\quad\hspace*{-0.28cm}+  i\big(6+\mu_1^2\big)\big)\biggr) - \frac14\mu_1\beta_{1,0}\left(4\sqrt{2}+\mu_1\right)+\beta_{0,-1} \biggr(-2i-i\lambda_2^2 + \lambda_3\mu_1+\lambda_2\mu_1^2 \phantom)\nonumber \\
    &\phantom(\phantom(\quad\quad\quad\hspace*{-0.28cm}-\frac{i}{8}\mu_1\left(12\sqrt{2}+8r_3+\mu_1\left(3+2\mu_1\left(\sqrt{2}-\mu_1 \right) \right) \right) \biggr)+ i\beta_{0,1}\left(-2+\frac{3}{16}\mu_1^2 \right)\biggr),  \\
    f_{4,2} &= -i\mu_1^2\biggr(\beta_{1,-1}-\frac{i}{4}\beta_{1,0}\big(8\sqrt{2}+\mu_1 \big)+\beta_{1,1}-\frac12\beta_{0,-1}\big(4\sqrt{2}+\mu_1\big) - 2\beta_{0,1}\sqrt{2} \biggr),  \\
    f_{4,3} &= -\biggr(i\beta_{2,-1} + \mu_1\beta_{1,0}\big(\sqrt{2}-\frac{1}{4}\mu_1 \big)+\beta_{1,1}\biggr(2\lambda_3 + \frac{i}{4}\mu_1\left(6-4r_2+4i\lambda_2+\mu_1^2 \right) \biggr) \phantom) \nonumber \\
    &\quad\quad\quad+i\beta_{0,-1}\big(2-\frac{3}{16}\mu_1^2 \big) + \beta_{0,1}\biggr(2i+i\lambda_2^2 + \lambda_2\mu_1^2-\frac{i}{8}\mu_1\big(12\sqrt{2} + 8r_3 - 8i\lambda_3\phantom) \phantom) \nonumber \\
    &\phantom(\phantom(\phantom(\phantom(\quad\quad\quad \hspace*{-0.54cm}+\mu_1\big(-3 \phantom) +2\mu_1\big(\sqrt{2}+\mu_1\big)\big)\big)\biggr)\biggr).
    \end{align}
    \end{subequations}
\endgroup
\no Solving for $\lambda_4$ on the half loop, we find
\begin{align}
    \lambda_4 = \lambda_{4,R}+i\lambda_{4,I},
\end{align}
\no with
\begin{subequations}
\begin{align}
    \lambda_{4,R} &= \mu_1\left(\frac{-2176+32r_2\left(32-12\mu_1^2+\mu_1^4\right)+\mu_1\left(1024\sqrt{2}+\mu_1\left(432-64\mu_1\sqrt{2} -92\mu_1^2+5\mu_1^4 \right) \label{55a} \right)}{128\left(8-\mu_1^2\right)^{3/2}}\right),\\
    \lambda_{4,I} &= \frac{1}{16}\mu_1\left(16\sqrt{2}+8r_3+\mu_1\left(8+3\mu_1\sqrt{2} \right) \right). \label{55b}
\end{align}
\end{subequations}
\no For ease of notation, let $\Lambda_{4,R}$ denote the numerator of \eqref{55a}. A direct calculation shows that $\Lambda_{4,R}$ factors as follows:
\begin{align}
   \Lambda_{4,R} &= \mu_1\left(2\sqrt{2}-\mu_1\right)\Big(-544\sqrt{2}+240\mu_1+168\mu_1^2\sqrt{2}+52\mu_1^3-10\mu_1^4\sqrt{2}-5\mu_1^5\phantom) \nonumber \\ &\phantom(\quad\quad\quad\hspace*{-0.45cm}-32r_2\left(2\sqrt{2}+\mu_1\right)\left(-4+\mu_1^2\right) \Big). \label{56}
\end{align}
\no It appears that \eqref{55a} already satisfies the regular curve condition, since $\Lambda_{4,R} \rightarrow 0$ as $\mu_1 \rightarrow 2\sqrt{2}$. However, the factor of $8-\mu_1^2$ in the denominator of \eqref{55a} is one power larger than at the previous order \eqref{49a}. Thus, we cannot guarantee regular behavior of $\lambda_{4,R}$ if only the first factor of \eqref{56} tends to zero as $\mu_1 \rightarrow 2\sqrt{2}$. We must also impose similar behavior on the second factor:
\begin{align}
   \lim_{\mu_1\rightarrow2\sqrt{2}} \Big(-544\sqrt{2}+240\mu_1+168\mu_1^2\sqrt{2}+52\mu_1^3-10\mu_1^4\sqrt{2}-5\mu_1^5\phantom)-32r_2\left(2\sqrt{2}+\mu_1\right)\left(-4+\mu_1^2\right) \Big)=0. \label{57}
\end{align}
~\\
Solving \eqref{57} for $r_2$ gives us the desired result $r_2 = 13/8$. As a consequence, the final expression for the fourth-order real part correction \eqref{55a} becomes
\begin{align}
    \lambda_{4,R} &= \mu_1\left(\frac{-512+\mu_1\left(1024\sqrt{2}+\mu_1\left(-192+\mu_1\left(-64\sqrt{2}+5\mu_1\left(-8+\mu_1^2 \right) \right)\right) \right)}{128\left(8-\mu_1^2\right)^{3/2}}\right). \label{58}
\end{align}
Since \eqref{55b} and \eqref{58} are generically nonzero for $0<\mu_1<2\sqrt{2}$, we have found another higher-order approximation to a half loop of the figure-eight curve in infinite depth, up to the unknown third-order rescaling parameter $r_3$. Presumably, one can determine this value at $\mathcal{O}\left(\varepsilon^5\right)$ using the techniques presented in this section. We stop here, since we have already obtained a higher-order approximation to the figure-eight curve at the previous order.
\begin{remark}
For the sake of completeness, the final expressions of $\beta_{2,0}$ and $\beta_{2,1}$ solving \eqref{54} are found in the companion Mathematica files.
\end{remark}

\section{Conclusions}

Building on work by Akers \cite{akers2015} and collaborations with Trichtchenko \emph{et al.} \cite{creedonDeconinckTrichtchenko2021b,creedonDeconinckTrichtchenko2021a,creedonDeconinckTrichtchenko2022}, we have developed a formal perturbation method to compute high-order asymptotic approximations of the Benjamin-Feir figure-eight curve, present in the stability spectrum of small-amplitude Stokes waves in water of sufficient depth. Unlike traditional methods in spectral perturbation theory \cite{kato1966}, this method allows us to approximate the entire curve at once. 

Using our method, we are able to determine 
\begin{enumerate}
    \item[(i)] the Floquet exponents that parameterize the figure-eight curve,
    \item[(ii)] the real and imaginary parts of the most unstable eigenvalue on the figure-eight curve, and
    \item[(iii)] algebraic curves asymptotic to the figure-eight curve.
\end{enumerate}
We compare these expressions directly with numerical computations of the figure-eight curve using methods presented in \cite{deconinckOliveras2011}. To our knowledge, this is the first time a numerical and analytical description of the Benjamin-Feir instability have been compared. Excellent agreement between these descriptions is found in finite and infinite depth, even for modest values of the Stokes wave amplitude $\varepsilon$. Our expressions are also consistent with the rigorous results of Berti \emph{et al.} \cite{bertiMasperoVentura2021,bertiMasperoVentura2022} and their heuristic approximations of the figure-eight curve.

In addition, our asymptotic results elucidate key differences between the behavior of the Benjamin-Feir instability spectrum in finite and infinite depth. In particular, the first-order rescaling parameter $r_1$ for the Floquet parameterization of the figure-eight curve vanishes in finite depth, while  $r_1 =- \sqrt{2}$ in infinite depth. Consequently, the second-order rescaling correction $r_2$ is singular in finite depth as one approaches infinitely deep water, \emph{i.e.}, as $\alpha \rightarrow \infty$. This singularity propagates to the imaginary part of the figure-eight curve at third order and to the real part of the figure-eight curve at fourth order. Thus, the limit as $\alpha \rightarrow \infty$ is singular for the Benjamin-Feir instability, illustrating the breakdown of compactness mentioned in \cite{bertiMasperoVentura2022,bridgesMielke1995,nguyenStrauss2020}. 

Using asymptotic results in this work and in \cite{creedonDeconinckTrichtchenko2022}, we are able to compare the Benjamin-Feir instability and the most unstable high-frequency instability for the first time analytically. Our analysis suggests three natural regimes for the water wave problem:
\begin{itemize}
    \item[(i)] Shallow water, which occurs when $\kappa h < \alpha_{BW} = 1.3627827567...$ and only high-frequency instabilities are present, 
    \item[(ii)] Intermediate water, which occurs when $\alpha_{BW}<\kappa h <\alpha_{DO}(\varepsilon) = 1.4308061674... + \mathcal{O}\left(\varepsilon^2\right)$ and both instabilities are present, but high-frequency instabilities dominate,
    \item[(iii)] Deep water, which occurs when $\kappa h > \alpha_{DO}(\varepsilon)$ and both instabilities are present, but the Benjamin-Feir instability dominates.
\end{itemize}
Here, $\kappa$ is the wavenumber of the Stokes wave and $h$ is the depth of the water. These regimes are supported by numerical computations in \cite{deconinckOliveras2011}. We conclude that Stokes waves of all depths and all wavenumbers are unstable to the Benjamin-Feir instability, high-frequency instabilities, or both.


\section*{Data Availability Statement}

The asymptotic expressions derived in this work can be found in the companion Mathematica files \emph{wwp\textunderscore bf\textunderscore  fd.nb} (for finite depth) and \emph{wwp\textunderscore bf\textunderscore id.nb} (for infinite depth). 

\section*{Acknowledgements} 

We acknowledge useful conversations with Massimiliano Berti, Huy Nguyen, and Walter Strauss. R.C. was supported in part by funding from an ARCS Foundation Fellowship and from the Ruth Jung Chinn Fellowship in Applied Mathematics at the University of Washington. 

\section*{Declaration of Interests}
The authors report no conflict of interest.

\bibliographystyle{plain}
\bibliography{bfpaper}

\end{document}